\newtheorem{theorem}{Theorem}
\newtheorem{lemma}{Lemma}
\newtheorem{definition}{Definition}
\newtheorem{remark}{Remark}
\DeclareMathOperator{\tr}{\mathrm{tr}}
\DeclareMathOperator{\SNR}{\mathrm{SNR}}
\def\BibTeX{{\rm B\kern-.05em{\sc i\kern-.025em b}\kern-.08em
    T\kern-.1667em\lower.7ex\hbox{E}\kern-.125emX}}
\newcommand{\sunderline}[1]{\mkern3mu\underline{\mkern-3mu#1\mkern-3mu}\mkern3mu }
\newcommand{\soverline}[1]{\mkern2mu\overline{\mkern-2mu#1\mkern-2mu}\mkern2mu }
\begin{document}
%
% paper title
% Titles are generally capitalized except for words such as a, an, and, as,
% at, but, by, for, in, nor, of, on, or, the, to and up, which are usually
% not capitalized unless they are the first or last word of the title.
% Linebreaks \\ can be used within to get better formatting as desired.
% Do not put math or special symbols in the title.
\title{Optimality of the Proper Gaussian Signal in Complex MIMO Wiretap Channels}
%
%
% author names and IEEE memberships
% note positions of commas and nonbreaking spaces ( ~ ) LaTeX will not break
% a structure at a ~ so this keeps an author's name from being broken across
% two lines.
% use \thanks{} to gain access to the first footnote area
% a separate \thanks must be used for each paragraph as LaTeX2e's \thanks
% was not built to handle multiple paragraphs
%

\author{Yong~Dong,~\IEEEmembership{Student Member,~IEEE,}
        Yinfei~Xu,~\IEEEmembership{Member,~IEEE,}
        Tong~Zhang,~\IEEEmembership{Member,~IEEE,}
        and~Yili~Xia,~\IEEEmembership{Member,~IEEE}% <-this % stops a space
\thanks{Yong~Dong, Yinfei~Xu, and Yili~Xia are with the School of Information Science and Engineering, Southeast University, Nanjing 210096, China (e-mail: dyjjd@seu.edu.cn; yinfeixu@seu.edu.cn; yili-xia@seu.edu.cn). Tong~Zhang is with the Department of Electrical and Electronic Engineering, Southern University of Science and Technology, Shenzhen 518055, China (e-mail: zhangt7@sustech.edu.cn).

Part of this paper has been submitted to WCNC 2023.}% <-this % stops a space
}

% note the % following the last \IEEEmembership and also \thanks - 
% these prevent an unwanted space from occurring between the last author name
% and the end of the author line. i.e., if you had this:
% 
% \author{....lastname \thanks{...} \thanks{...} }
%                     ^------------^------------^----Do not want these spaces!
%
% a space would be appended to the last name and could cause every name on that
% line to be shifted left slightly. This is one of those "LaTeX things". For
% instance, "\textbf{A} \textbf{B}" will typeset as "A B" not "AB". To get
% "AB" then you have to do: "\textbf{A}\textbf{B}"
% \thanks is no different in this regard, so shield the last } of each \thanks
% that ends a line with a % and do not let a space in before the next \thanks.
% Spaces after \IEEEmembership other than the last one are OK (and needed) as
% you are supposed to have spaces between the names. For what it is worth,
% this is a minor point as most people would not even notice if the said evil
% space somehow managed to creep in.

% The paper headers
\markboth{Journal of \LaTeX\ Class Files,~Vol.~14, No.~8, August~2015}%
{Dong \MakeLowercase{\textit{et al.}}: MIMO Wiretap Channel}
% The only time the second header will appear is for the odd numbered pages
% after the title page when using the twoside option.
% 
% *** Note that you probably will NOT want to include the author's ***
% *** name in the headers of peer review papers.                   ***
% You can use \ifCLASSOPTIONpeerreview for conditional compilation here if
% you desire.

% If you want to put a publisher's ID mark on the page you can do it like
% this:
%\IEEEpubid{0000--0000/00\$00.00~\copyright~2015 IEEE}
% Remember, if you use this you must call \IEEEpubidadjcol in the second
% column for its text to clear the IEEEpubid mark.

% use for special paper notices
%\IEEEspecialpapernotice{(Invited Paper)}

% make the title area
\maketitle

% As a general rule, do not put math, special symbols or citations
% in the abstract or keywords.
\begin{abstract}
The multiple-input multiple-output (MIMO) wiretap channel (WTC), which has a transmitter, a legitimate user and an eavesdropper, is a classic model for studying information theoretic secrecy. In this paper, the fundamental problem for the complex WTC is whether the proper signal is optimal has yet to be given explicit proof, though previous work implicitly assumed the complex signal was proper. Thus, a determinant inequality is proposed to prove that the secrecy rate of a complex Gaussian signal with a fixed covariance matrix in a degraded complex WTC is maximized if and only if the signal is proper, i.e. the pseudo-covariance matrix is a zero matrix. Moreover, based on the result of the degraded complex WTC and the min-max reformulation of the secrecy capacity, the optimality of the proper signal in the general complex WTC is also revealed. The results of this research complement the current research on complex WTC. To be more specific, we have shown it is sufficient to focus on the proper signal when studying the secrecy capacity of the complex WTC.
\end{abstract}

% Note that keywords are not normally used for peerreview papers.
\begin{IEEEkeywords}
Matrix inequality, proper and improper signals, secrecy capacity, wiretap channel.
\end{IEEEkeywords}

% For peer review papers, you can put extra information on the cover
% page as needed:
% \ifCLASSOPTIONpeerreview
% \begin{center} \bfseries EDICS Category: 3-BBND \end{center}
% \fi
%
% For peerreview papers, this IEEEtran command inserts a page break and
% creates the second title. It will be ignored for other modes.
\IEEEpeerreviewmaketitle

\section{Introduction}
The broadcast nature of wireless communications has identified secure communications as an important issue for many years. Traditionally, communication security is achieved through cryptographic techniques to achieve computational security at an upper layer. However, whether a communication system is computational secure is decided by the comparison of the complexity of attacking the system and the computing power of the attacker, which is under certain conditions. In contrast, physical security at the lowest level was proposed to ensure secure communications regardless of the attacker's computing power \cite{shannonCommunicationTheorySecrecy1949}. 
\subsection{Wiretap Channel and Proper Signal}
\begin{figure}[!t]
\centering
\includegraphics[width=0.7\textwidth]{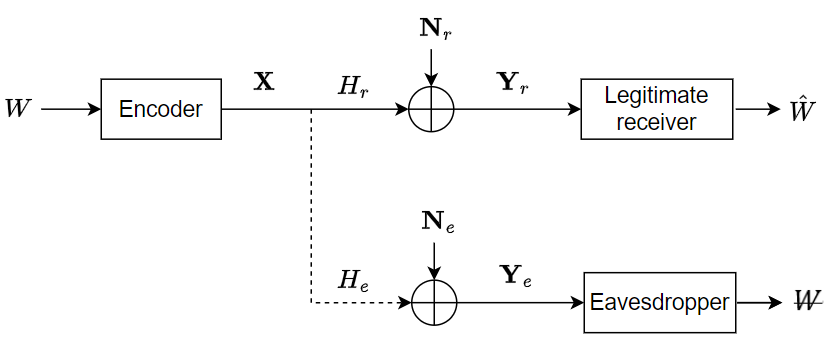}
\caption{MIMO WTC, with $n_t$, $n_r$, and $n_e$ antennas respectively at the transmitter, legitimate receiver, and eavesdropper.}
\label{fig_model}
\end{figure}
 A well-known model of physical layer security is the MIMO WTC \cite{leung-yan-cheongGaussianWiretapChannel1978}, in which a transmitter (Alice) wishes to communicate with a legitimate receiver (Bob) in the presence of an eavesdropper (Eve), as shown in Fig. \ref{fig_model}. Denote the number of antennas for Alice, Bob and Eve by $n_t$, $n_r$, and $n_e$ respectively. Let $H_r \in \mathbb{C}^{n_r \times n_t}$ and $H_e \in \mathbb{C}^{n_e \times n_t}$ be the channel matrices for the legitimate user and eavesdropper. Then the signals received at Bob and Eve can be formulated as
\begin{subequations}\label{eq:1}
\begin{align}
\mathbf{Y}_r &=H_r\mathbf{X}+\mathbf{N}_r, \label{eq:1a}\\
\mathbf{Y}_e &= H_e\mathbf{X}+\mathbf{N}_e, \label{eq:1b}
\end{align}
\end{subequations}
 where $\mathbf{N}_r \in \mathbb{C}^{n_r \times 1} $ and $\mathbf{N}_e \in \mathbb{C}^{n_e \times 1}$ are independent proper Gaussian white noise, i.e. $\mathbf{N}_r  \sim \mathcal{CN}(\mathbf{0},I_{n_r})$ and $\mathbf{N}_e  \sim \mathcal{CN}(\mathbf{0},I_{n_e})$.
For the zero-mean complex random vector $\mathbf{X}$, its covariance matrix and pseudo-covariance matrix are denoted by
\begin{equation*}
    K_{\mathbf{X}}=\mathbb{E}\{\mathbf{X}\mathbf{X}^H\},\quad \tilde{K}_{\mathbf{X}}=\mathbb{E}\{\mathbf{X}\mathbf{X}^T\},
\end{equation*} 
respectively. $K_{\mathbf{X}}$ is Hermitian and positive semidefinite, while $\tilde{K}_{\mathbf{X}}$ is symmetric. If the pseudo-covariance matrix $\tilde{K}_{\mathbf{X}}=0$, $\mathbf{X}$ turns to a proper Gaussian signal; otherwise $\mathbf{X}$ is called improper. 
Besides, for a pair of Hermitian and positive semidefinite matrix $K$ and symmetric matrix $\tilde{K}$, there exists a random vector with covariance and pseudo-covariance given by $K$ and $\tilde{K}$ if and only if (iff) the augmented covariance matrix $\sunderline{K}$ satisfies
\begin{equation}\label{eq:2}
    \sunderline{K} \triangleq 
    \begin{bmatrix}
    K & \tilde{K}\\
    \tilde{K}^* & K^*
    \end{bmatrix}
    \succeq 0,
\end{equation}
i.e., $\sunderline{K}$ is positive semidefinite\cite[Section 2.2.2]{schreierStatisticalSignalProcessing2010}.
\subsection{The Secrecy capacity of the complex WTC with proper signals}
The term secrecy capacity, which was originally described as the maximum rate of reliable transmission under perfect secrecy, was first introduced by Wyner, as shown in the following definition.
\begin{definition}[Secrecy Capacity]
The perfect secrecy capacity $C_s$ is the maximum achievable rate $R_e$ that makes the decoding error at the legitimate receiver and the information leakage at the eavesdropper both tend to zero\cite{wynerWireTapChannel1975}. 
\end{definition}
A single letter expression of the secrecy capacity of the discrete memoryless (DM) WTC with transition probability $p(y_r,y_e|x)$ is given in \cite{csiszarBroadcastChannelsConfidential1978} as
\begin{equation}\label{eq:3}
    C_s=\mathop{\max}_{p(u,x)} [I(U;Y_r)-I(U;Y_e)],
\end{equation}
in which the auxiliary random variable $U$ and signal variables $X,Y_r$ and $Y_e$ form a Markov chain $U \to X \to (Y_r,Y_e)$, where $|\mathcal{U}| \leq |\mathcal{X}|$. As for the MIMO complex WTC under a sum power constraint, it has been proved that the secrecy capacity is
\begin{align}\label{eq:4}
    C_p&=\mathop{\max}_{K_{\mathbf{X}}\in \mathcal{K} } R_p(K_{\mathbf{X}})=  \mathop{\max}_{K_{\mathbf{X}}\in \mathcal{K} } \left[ \log\det(I_{n_r}+H_r K_{\mathbf{X}} H_r^H)-\log\det(I_{n_e}+H_e K_{\mathbf{X}} H_e^H) \right],
\end{align}
where $\mathcal{K}=\{K_{\mathbf{X}}|K_{\mathbf{X}}\succeq 0, \tr(K_{\mathbf{X}})  \leq P\}$ is the set of all possible covariance matrices and $R_p(K_{\mathbf{X}})$ is the maximum achievable secrecy rate between Alice and Bob when the covariance matrix of the transmitted signal is $K_{\mathbf{X}}$ \cite{khistiSecureTransmissionMultiple2010,khistiSecureTransmissionMultiple2010a,oggierSecrecyCapacityMIMO2011,tieliuNoteSecrecyCapacity2009}.

Nevertheless, \eqref{eq:4} is actually the secrecy capacity with proper transmitted signals rather than general complex signals. Previous studies did not explicitly explain why the optimal Gaussian signal is proper, which is not evident from our point of view. The research would have been more complete if the secrecy capacity of the complex WTC with general complex signals had been given or if the optimality of the proper signal had been demonstrated.
\subsection{Literature Review and Motivations}
This subsection is devoted to an overview of the improper signal and the WTC. For a general complex-valued signal, it is either proper or improper \cite{schreierStatisticalSignalProcessing2010}. Over the past few decades, it has usually been implicitly assumed that the complex signal is proper, which means the complex signal is uncorrelated with its complex conjugate. When calculating the channel capacity, treating complex signals as proper signals is convenient because the related calculation and the results are similar to those of real signals. Besides, Neeser et al. reveal that the differential entropy of a complex random vector with a fixed correlation matrix is maximized if and only if (iff) the vectore is proper, Gaussian and zero mean \cite{neeserProperComplexRandom1993}, so the proper signal is capacity achieving for the point-to-point channel \cite{telatarCapacityMultiantennaGaussian1999a} and multiple access channel (MAC) \cite[Lemma 1]{hellingsOptimalityProperSignaling2014}. In addition, it was shown in \cite{vishwanathDualityAchievableRates2003a} that the Sato upper bound on the sum rate of the Gaussian MIMO broadcast channel (BC) can be achieved by dirty paper coding (DPC) with proper Gaussian signals under a sum power constraint. Furthermore, in \cite{hellingsOptimalityProperSignaling2014}, the opitimality of proper signals is generalized to Gaussian MIMO BC with DPC under a sum covariance constraint and it points out that the user encodes at last in a BC channel is without interference, so the link from the transmitter to this user is similar to a point-to-point channel, hence proper signals are optimal for the last link, which leads to the optimality of proper signals for the penultimate link and all the other links so forth. As a result, it gradually becomes a common assumption in the theoretical analysis of information theory that the transmitted signal is proper as long as the capacity achieving distribution proved to be Gaussian.

However, there are other situations where improper Gaussian signals outperform proper Gaussian signals with respect to channel capacities or achievable rates. Most of them are interference-dominant situations because the existence of interference makes the achievable rate in the form of a difference of two logarithm-determinant functions (or equivalently a logarithm-determinant function with a non-constant denominator), such as the interference channel (IC) \cite{hoImproperGaussianSignaling2012,zengTransmitOptimizationImproper2013,zengOptimizedTransmissionImproper2013,kurniawanImproperGaussianSignaling2015,lameiroRateRegionBoundary2017}, the relay channel \cite{gaafarImproperGaussianSignaling2016,gaafarAlternateRelayingImproper2016} and the cognitive network \cite{lameiroBenefitsImproperSignaling2015,aminUnderlayCognitiveRadio2016,lameiroMaximallyImproperSignaling2019}. Nevertheless, Neeser et al.'s result appears to only guarantee that one logarithm-determinant function or the sum of several logarithm-determinant functions can be maximized with proper signals \cite{neeserProperComplexRandom1993}. In \cite{zengTransmitOptimizationImproper2013}, it was shown that the use of improper signals brings about an improvement of achievable rates over proper signals in ICs with interference treated as additive noise. In \cite[Remark 1]{zengOptimizedTransmissionImproper2013}, it described heuristically that from the user k's perspective in the IC, if interference is treated as noise, the user k essentially communicates over a point-to-point channel. However, the proper signal is strictly sub-optimal in this point-to-point channel, which is mainly because the noise containing interference can be improper if the transmitted signal is improper, so the analysis for the original point-to-point channel is no longer applicable here. Moreover, the superiority of improper signals was shown and a closed form real-composite transmit covariance matrix was derived to maximize the sum-rate of the Z-interference channel in \cite{kurniawanImproperGaussianSignaling2015}. With the help of the circularity coefficient, \cite{lameiroRateRegionBoundary2017} obtained a necessary and sufficient condition to decide when the improper signal is optimal for the Z-IC. For the two-user BC with two private messages, it has been derived that every boundary point is achieved when at least one user employs the improper signal in \cite{lameiroImproperGaussianSignaling2019}. 

In this paper, we consider complex Gaussian signals for the WTC. Ever since introduced by Wyner in \cite{wynerWireTapChannel1975} and extended to Gaussian case in \cite{leung-yan-cheongGaussianWiretapChannel1978}, the secrecy capacity of the MIMO Gaussian WTC has yet to be given in closed form. Nevertheless the MISO WTC \cite{khistiSecureTransmissionMultiple2010,liOptimalInputCovariance2010} and some special MIMO WTCs, such as the strictly degraded Gaussian MIMO WTC with sufficiently large power \cite{loykaOptimalSignalingSecure2016}, were studied and given with the optimal transmitted covariance matrix. In addition, by assuming the optimal covariance matrix is full-rank, a method was presented for characterizing the optimal covariance matrix with an arbitrary number of antennas \cite{fakoorianFullRankSolutions2013}. Recently, a series of optimization method has been put forward to find the numerical result of the Gaussian MIMO WTC. In \cite{cumananSecrecyRateOptimizations2014}, a difference of convex functions algormithm (DCA) was proposed for the secrecy rate of the WTC. The secrecy rate is with the form of a two logarithm-determinant functions thus is the difference of two convex functions. While another numerical method is based on the min-max (convex-concave) reformulation of the secrecy capacity for the Gaussian MIMO WTC \cite{nguyenLowComplexityAlgorithmAchieving2020}. More recently, an accelerated algorithm (ACDA) and a partial best response algorithm (PBRA) are proposed in \cite{mukherjeeSecrecyCapacityMIMO2021}. 

These works focus on proper signals, but we observe that although from the legitimate user's perspective the communication is point-to-point, the existence of an eavesdropper makes the secrecy capacity of the MIMO WTC also has the form of the difference of two logarithm-determinant functions, which is similar to the achievable rate for one specific user in the IC, where the improper signal can enlarge the achievable rate. On the other hand, we cannot directly assert that improper signals are optimal since improper signals sometimes may bring no gains for the IC \cite{hellingsImproperSignalingTimeSharing2017}. Therefore, in this paper, we try to decide whether the proper signal is optimal for the WTC. If it is capacity achieving, we need a new method to prove its optimality because the result of point-to-point channel in \cite{telatarCapacityMultiantennaGaussian1999a} does not apply for the WTC. This motivates us to construct a determinant inequality which functions the same as the Fischer inequality, which is used in proving proper Gaussian zero-mean signal can maximize the differential entropy \cite[Result 2.2]{schreierStatisticalSignalProcessing2010}. With this inequality, we prove that the proper signal is optimal in achieving the secrecy capacity of the complex WTC.

\subsection{Main Contributions}
This paper establishes several results about the secrecy capacity of the complex MIMO WTC. 
\begin{itemize}
\item This paper relies on information measures' dependence on the distribution of random variables rather than the form of random variables to obtain the secrecy rate of a general complex signal,  which is expressed with augmented covariance matrix. Besides, this paper characterizes the secrecy capacity with general complex signals under a sum power constraint, which is a maximization of achievable rates over the set of all possible augmented covariance matrices. Thereafter we briefly demonstrate that the previous achievable rate for a proper signal is a special case of the achievable rate for a general complex signal by choosing a zero pseudo-covariance matrix. 
\item This paper derives a matrix determinant inequality from a simple conditional entropy inequality. This determinant inequality, which is similar to the Fischer-inequality, plays an important role in proving that the proper signal is capacity achieving in the degraded WTC.
\item The determinant inequality suggests that the secrecy rate of a complex random vector with a fixed correlation matrix is maximized if and only if the signal is proper, Gaussian and zero-mean, where the optimalilty of Gaussian and zero-mean signals has been proven in the previous work and the superiority of proper signals over improper signals is demonstrated with the determinant inequality in this work. As a result, the proper signal is capacity achieving for the degraded complex WTC.
\item  This paper utilizes the min-max and max-min reformulation of the secrecy capacity of the WTC, which is the minimization on the set of correlation matrices of two augmented noises and the maximization on the set of augmented covariance matrices. Then the result of the degraded WTC can be applied to the min-max reformulated expression and presents the secrecy capacity is achieved when the signal is proper. 
\end{itemize}
\subsection{Paper Organization and Notation}
The rest of this paper is organized as follows. Section \ref{sec2} studies the secrecy capacity of the general WTC, which is expressed with the augmented covariance matrix. Section \ref{sec3} focuses on the degraded WTC and proves that the proper signal can achieve the secrecy capacity. In Section \ref{sec4}, the optimality of the proper signal for the general WTC is proved based on the result of degraded WTC and the min-max reformulation of the secrecy capacity. In section \ref{sec5}, we utilizes two algorithms in previous research to compare the maximum achievable rate for the proper signal and general complex signal. Finally, we conclude the paper in Section \ref{sec6}.

We use lowercase letters $x,y,\cdots$ to denote scalars. Meanwhile, we use uppercase letters $X,Y,\cdots$ and bold uppercase letters $\mathbf{X},\mathbf{Y},\cdots$ to denote random variables and random vectors, respectively. Calligraphic letters $\mathcal{X},\mathcal{Y},\cdots$ are used for finite sets, and $\dim(\mathcal{X})$ stands for the cardinality of the set $\mathcal{X}$. We also denote the dimension of random vector $\mathbf{X}$ by $\dim(\mathbf{X})$. Matrices are written in uppercase letters $A,B,\cdots$ and $\tr(\cdot)$, $\det(\cdot)$, $(\cdot)^*$, $(\cdot)^T$, and $(\cdot)^H$ are used to denote the trace, determinant, complex-conjugate, transpose, and conjugate transpose of a matrix, respectively. Besides, $\mathcal{CN}(\boldsymbol{\mu}, K)$ denotes a circularly symmetric complex-valued Gaussian random vector with mean vector $\boldsymbol{\mu}$ and covariance $K$. $I_m$ represents the identity matrix of size m. $\mathbb{E}\{\cdot\}$ denotes the expectation of a random variable. $\mathfrak{R}\{\cdot\}$ and $\mathfrak{I}\{\cdot\}$ represent the real and imaginary parts of a complex number, respectively.

\section{The Secrecy Capacity of the complex WTC with General Complex Signals}\label{sec2}
In the section that follows, the situation that the transmitted signal is a general complex signal will be argued. In contrast to the assumption in the previous study that the transmitted signal is a proper signal, the secrecy rate for general complex signals is expressed with augmented covariance matrices and augmented channel matrices. Although any complex system may be transformed into an equivalent real system, the elegance of the original system description will lose \cite{schreierSecondorderAnalysisImproper2003}. Therefore, we will rely on the equivalent real WTC to obtain the secrecy rate of the complex WTC, but we rewrite the secrecy rate with complex matrices from the complex system so as to keep the elegance of the expression, hence the secrecy rate of the WTC with general complex signal is similar to the secrecy rate of the WTC with the proper signal.
\begin{theorem}\label{thm:1}
The secrecy capacity of the MIMO WTC with a general complex signal is
\begin{equation}\label{eq:5}
    C_g=\mathop{\max}_{\sunderline{K}_{\mathbf{X}}\in \mathcal{\sunderline{K}}_g} R_g(\sunderline{K}_{\mathbf{X}})=\mathop{\max}_{\sunderline{K}_{\mathbf{X}}\in \mathcal{\sunderline{K}}_g}\left[\frac{1}{2}\log\det(I_{2n_r}+\sunderline{H}_r\sunderline{K}_{\mathbf{X}}\sunderline{H}_r^H)-\frac{1}{2}\log\det(I_{2n_e}+\sunderline{H}_e\sunderline{K}_{\mathbf{X}}\sunderline{H}_e^H)\right].
\end{equation}
The augmented covariance $\sunderline{K}_{\mathbf{X}}$ is defined in \eqref{eq:2} and the augmented channel matrices $\sunderline{H}_r$ and $\sunderline{H}_e$ are defined as
\begin{equation}
    \sunderline{H}_r=\begin{bmatrix}
    H_r & 0\\
    0 & H_r^*
    \end{bmatrix},\quad\sunderline{H}_e=\begin{bmatrix}
    H_e & 0\\
    0 & H_e^*
    \end{bmatrix}.
\end{equation}
 The feasible set $\mathcal{\sunderline{K}}_g=\{\sunderline{K}_\mathbf{X}|\sunderline{K}_{\mathbf{X}}=\begin{bmatrix}
 K_\mathbf{X} & \tilde{K}_\mathbf{X}\\
\tilde{K}_\mathbf{X}^* & K_\mathbf{X}^*
 \end{bmatrix} \succeq 0, \tr(\sunderline{K}_{\mathbf{X}}) \leq 2P\}$ is the set of all possible augmented covariance matrices.
\end{theorem}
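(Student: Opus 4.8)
The plan is to reduce Theorem~\ref{thm:1} to the already-established secrecy capacity of the \emph{real} Gaussian MIMO WTC via the standard real-composite reformulation, and then to repackage the answer in the augmented complex notation. First I would stack the real and imaginary parts of every complex vector, so that $\mathbf{X}\in\mathbb{C}^{n_t}$, $\mathbf{Y}_r$, $\mathbf{Y}_e$, $\mathbf{N}_r$, $\mathbf{N}_e$ are each replaced by a real vector of twice the dimension. Then $H_r\mathbf{X}$ becomes multiplication by a real matrix $G_r\in\mathbb{R}^{2n_r\times 2n_t}$ with the usual real/imaginary block structure (and similarly $G_e$), while the proper noise $\mathbf{N}_r\sim\mathcal{CN}(\mathbf{0},I_{n_r})$ becomes a real Gaussian vector with covariance $\tfrac{1}{2}I_{2n_r}$, and likewise for Eve. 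Since this stacking is a fixed invertible deterministic map, a wiretap code for the complex channel is a wiretap code for the real-composite channel with identical error probability and information leakage and conversely, and every conditional mutual information in the Csisz\'ar--K\"orner characterization \eqref{eq:3} is preserved; hence the two channels have equal secrecy capacity, so $C_g$ is the secrecy capacity of the real-composite WTC.

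Next I would invoke the known secrecy capacity of the real Gaussian MIMO WTC \cite{tieliuNoteSecrecyCapacity2009,oggierSecrecyCapacityMIMO2011}: Gaussian input is optimal (with no auxiliary variable needed), and the capacity equals the maximum, over real covariance matrices $Q\succeq 0$ with $\tr(Q)\le P$, of
\[
\tfrac12\log\det\!\big(I_{2n_r}+2\,G_r Q G_r^{T}\big)-\tfrac12\log\det\!\big(I_{2n_e}+2\,G_e Q G_e^{T}\big),
\]
where the factor $2$ reflects the noise covariance $\tfrac{1}{2}I$. The essential point is that this statement imposes no structural restriction on the real input---in the real domain there is no ``properness'' gap of the kind that \eqref{eq:4} carries---so it applies verbatim to the channel built in the first step.

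Finally I would translate back using the unitary matrix $T=\tfrac{1}{\sqrt{2}}\left[\begin{smallmatrix} I & iI \\ I & -iI\end{smallmatrix}\right]$ (of the appropriate sizes for the transmitter, Bob, and Eve), which sends the real-composite vector to the augmented vector $[\mathbf{X}^{T},\mathbf{X}^{H}]^{T}$. Under congruence by $T$ the real covariance $Q$ maps to $\sunderline{K}_{\mathbf{X}}=2\,TQT^{H}$ and the real channel $G_r$ maps to the block-diagonal $\sunderline{H}_r$ (and $G_e$ to $\sunderline{H}_e$); since $T$ is unitary, $\det(I_{2n_r}+2G_rQG_r^{T})=\det(I_{2n_r}+\sunderline{H}_r\sunderline{K}_{\mathbf{X}}\sunderline{H}_r^{H})$ and likewise for the eavesdropper term. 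The constraint $\tr(Q)\le P$ becomes $\tr(\sunderline{K}_{\mathbf{X}})\le 2P$ because $\tr(\sunderline{K}_{\mathbf{X}})=2\tr(Q)$, while $Q\succeq 0$ is equivalent to $\sunderline{K}_{\mathbf{X}}\succeq 0$, and by \eqref{eq:2} the latter is exactly the statement that $\sunderline{K}_{\mathbf{X}}$ is a legitimate augmented covariance, i.e.\ $\sunderline{K}_{\mathbf{X}}\in\mathcal{\sunderline{K}}_g$. Substituting these identities into the real-domain maximum yields \eqref{eq:5}.

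The main obstacle is the first step: one must verify carefully that the real-composite reformulation preserves \emph{both} the secrecy capacity and the \emph{entire} class of admissible inputs, so that the real-channel theorem can be applied without loss of generality---this is precisely where the paper's observation that \eqref{eq:4} was only justified for proper inputs becomes relevant, since the corresponding real result has no such restriction. Everything after that is routine bookkeeping with the unitary $T$ and with the factors of two entering through the noise covariance and the power constraint.
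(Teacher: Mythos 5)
Your proposal is correct and follows essentially the same route as the paper's Appendix~\ref{appendixa}: reduce to the real-composite WTC (whose secrecy capacity is known with Gaussian inputs and no properness restriction), then map back to the augmented complex notation via the unitary $\tfrac{1}{\sqrt{2}}\bigl[\begin{smallmatrix} I & iI \\ I & -iI\end{smallmatrix}\bigr]$. The only difference is cosmetic bookkeeping --- the paper absorbs the $\sqrt{2}$ into the channel outputs so the real noise has identity covariance, whereas you keep noise covariance $\tfrac{1}{2}I$ and carry the compensating factor of $2$ inside the log-determinants; both lead to \eqref{eq:5}.
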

\begin{proof}
Please refer to Appendix \ref{appendixa}.
\end{proof}

Note that if transmitted signals are proper, which means $\tilde{K}_{\mathbf{X}}=0$, we will denote the set of proper signals' augmented covariance matrices by $\mathcal{\sunderline{K}}_p$ as
\begin{align}
\mathcal{\sunderline{K}}_p=\{\sunderline{K}_\mathbf{X}|\sunderline{K}_{\mathbf{X}}=\begin{bmatrix}
 K_\mathbf{X} & 0\\
0 & K_\mathbf{X}^*
 \end{bmatrix} \succeq 0, \tr(\sunderline{K}_{\mathbf{X}}) \leq 2P\},
\end{align}
then we have
\begin{align}
C_g&=\mathop{\max}_{\sunderline{K}_{\mathbf{X}}\in \mathcal{\sunderline{K}}_p} \left[ \frac{1}{2}\log\det(I_{2n_r}+\sunderline{H}_r\sunderline{K}_{\mathbf{X}}\sunderline{H}_r^H)-\frac{1}{2}\log\det(I_{2n_e}+\sunderline{H}_e\sunderline{K}_{\mathbf{X}}\sunderline{H}_e^H) \right]\nonumber\\
&=\mathop{\max}_{K_{\mathbf{X}}\in \mathcal{K}} \left[ \frac{1}{2}\log\det\left(I_{2n_r}+\begin{bmatrix}
    H_r & 0\\
    0 & H_r^*
    \end{bmatrix}
\begin{bmatrix}
    K_\mathbf{X} & 0\\
    0 & K_\mathbf{X}^*
\end{bmatrix}
\begin{bmatrix}
    H_r & 0\\
    0 & H_r^*
\end{bmatrix}^H\right)\right.\nonumber\\
&\left.\quad-\frac{1}{2}\log\det\left(I_{2n_e}+\begin{bmatrix}
    H_e & 0\\
    0 & H_e^*
    \end{bmatrix}
\begin{bmatrix}
    K_\mathbf{X} & 0\\
    0 & K_\mathbf{X}^*
\end{bmatrix}
\begin{bmatrix}
    H_e & 0\\
    0 & H_e^*
\end{bmatrix}^H\right) \right]\nonumber\\
&=\mathop{\max}_{K_{\mathbf{X}}\in \mathcal{K}} \left[ \log\det(I_{n_r}+H_r K_{\mathbf{X}} H_r^H)-\log\det(I_{n_e}+H_e K_{\mathbf{X}} H_e^H) \right]\nonumber\\
&=C_p
\end{align}
which means \eqref{eq:5} is the same as \eqref{eq:4} if the transmitted signal is proper. However, if the transmitted signal is improper, then \eqref{eq:5} will be different to \eqref{eq:4}. In fact, because $\mathcal{\sunderline{K}}_p \subseteq \mathcal{\sunderline{K}}_g$, we find
\begin{align}\label{eq:9}
C_p\leq C_g
\end{align}
always holds.
Although proper Gaussian signals are capacity achieving in many channels, improper Gaussian signals achieve larger rate regions in interference channels when the achievable rate is the difference of logarithm-determinant functions. In this paper, the optimality of the proper Gaussian signal will be proved using a Fischer-like determinant inequality. As a result, the secrecy capacity in \eqref{eq:4} is also the secrecy capacity of the WTC with general complex signals.
\section{Optimality of Proper Signal for the Degraded MIMO WTC}\label{sec3}
If the channel is degraded , i.e. $H_r^H H^r-H_e^H H_e \succ 0$\cite[Section 1.C]{oggierSecrecyCapacityMIMO2011}, then let
\begin{equation}\label{eq:delta}
    \Delta=H_r^H H_r-H_e^H H_e \succ 0,
\end{equation}
and let
\begin{equation}\label{eq:delta_}
    \underline{\Delta}=\sunderline{H}_r^H\sunderline{H}_r-\sunderline{H}_e^H\sunderline{H}_e=\begin{bmatrix}
    \Delta & 0\\
    0 & \Delta^*\\
    \end{bmatrix}.
\end{equation}
$\Delta$ and $\sunderline{\Delta}$ are definite, which means there exist $\Delta^\frac{1}{2}$ and $\underline{\Delta}^\frac{1}{2}$\footnote[2]{$\Delta^\frac{1}{2}$ is any matrix $A$ satisfying $A^HA=\Delta$. If unitary diagonalization is used to construct $A$ then $A=A^H.$}. In order to show the optimality of proper signals in degraded WTCs, firstly we will list several matrix equalities that will be used. 
\begin{itemize}
\item 
For any $A\in\mathbb{C}^{m\times n}$ and $B \in \mathbb{C}^{n \times m}$, we have\footnote{Sketch of proof: 
Let $
M=\begin{bmatrix}
I_m & -A\\B & I_n
\end{bmatrix},
G=\begin{bmatrix}
I_m & A\\ 0 & I_n
\end{bmatrix},
$
then we have $\det(MG)=\det(GM)$.}
\begin{align} \label{eq:m1}
\det(I_m+AB)=\det(I_n+BA).
\end{align}

\item
For any $A\in\mathbb{C}^{p\times n}$, $B \in \mathbb{C}^{q \times n}$ and $C \in \mathbb{C}^{n \times n}$, we have
\begin{align}\label{eq:m2}
\begin{bmatrix}
A\\B
\end{bmatrix}
C
\begin{bmatrix}
A^H&B^H
\end{bmatrix}
=\begin{bmatrix}
A&0\\0&B
\end{bmatrix}
\begin{bmatrix}
C&C\\C&C
\end{bmatrix}
\begin{bmatrix}
A^H&0\\0&B^H
\end{bmatrix}.
\end{align}
\item
Since elementary row operation and column operation do not change the determinant of a matrix, for a block matrix
\begin{align}
A=\begin{bmatrix}
A_{11}&A_{12}&A_{13}&A_{14}\\
A_{21}&A_{22}&A_{23}&A_{24}\\
A_{31}&A_{32}&A_{33}&A_{34}\\
A_{41}&A_{42}&A_{43}&A_{44}\\
\end{bmatrix},
\end{align}
we can interchange the second row with the third row and interchange the second column with the third column at the same time and won't change the value of the determinant, thus we can obtain
\begin{align}\label{eq:swap}
\det(A)=\det\left(\begin{bmatrix}
A_{11}&A_{13}&A_{12}&A_{14}\\
A_{31}&A_{33}&A_{32}&A_{34}\\
A_{21}&A_{23}&A_{22}&A_{24}\\
A_{41}&A_{43}&A_{42}&A_{44}\\
\end{bmatrix}\right).
\end{align}
\end{itemize}

In addition to above matrix equalities, an determinant inequality, which is similar to Fischer determinant inequality, is put forward and will play a key role in the proof. Let $A$ be a square matrix of size $n$. $\mathcal{S}_1, \mathcal{S}_2 \in \{1,2,\cdots,n\}$ are two index sets. We denote by $A(\mathcal{S}_1,\mathcal{S}_2)$ the submatrix of entries that lie in the rows indexed by $\mathcal{S}_1$ and columns indexed by $\mathcal{S}_2$. If $\mathcal{S}_1=\mathcal{S}_2$, $A(\mathcal{S}_1)=A(\mathcal{S}_1,\mathcal{S}_1)$ is called a principal submatrix of $A$.
\begin{lemma}\label{lemma:1}
Let $K$ be a positive definite $k \times k$ matrix. Index sets  $\mathcal{S}_1,\mathcal{S}_2,\mathcal{S}_3$ and $\mathcal{S}_4$, which have the form of $\mathcal{S}_1=[1:k_1],\mathcal{S}_2=[k_1+1:k_2],\mathcal{S}_3=[k_2+1:k_3]$ and $\mathcal{S}_4=[k_3+1:k]$, constitute a sequential partition of $\{1,2,\cdots,k\}$. Then we have
\begin{equation}\label{eq:fish}
    \frac{\det(K)}{\det(K(\mathcal{S}_2 \cup \mathcal{S}_4))} \leq \frac{\det(K(\mathcal{S}_1 \cup \mathcal{S}_2))}{\det(K(\mathcal{S}_2))} \cdot \frac{\det(K(\mathcal{S}_3 \cup \mathcal{S}_4))}{\det(K(\mathcal{S}_4))},
\end{equation}
with equality iff $K(\mathcal{S}_1 \cup \mathcal{S}_2,\mathcal{S}_3 \cup \mathcal{S}_4)=K(\mathcal{S}_3 \cup \mathcal{S}_4,\mathcal{S}_1 \cup \mathcal{S}_2)=0$.
\end{lemma}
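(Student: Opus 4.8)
The plan is to read every determinant in \eqref{eq:fish} as a differential entropy of a Gaussian vector and reduce the claim to an elementary conditional-entropy inequality. Let $\mathbf{Z}=(\mathbf{Z}_1,\mathbf{Z}_2,\mathbf{Z}_3,\mathbf{Z}_4)$ be a zero-mean Gaussian vector with covariance $K$ (nondegenerate since $K\succ 0$; real, or circularly symmetric when $K$ is complex — the two cases go the same way), with $\mathbf{Z}_i$ the subvector of coordinates indexed by $\mathcal{S}_i$. Using $h(\mathbf{Z}_{\mathcal{S}})=\frac{1}{2}\log\big((2\pi e)^{|\mathcal{S}|}\det K(\mathcal{S})\big)$ and the chain rule for differential entropy, the left-hand ratio and the two right-hand ratios of \eqref{eq:fish} are, up to powers of $2\pi e$, equal to $e^{2h(\mathbf{Z}_1,\mathbf{Z}_3\mid\mathbf{Z}_2,\mathbf{Z}_4)}$, $e^{2h(\mathbf{Z}_1\mid\mathbf{Z}_2)}$ and $e^{2h(\mathbf{Z}_3\mid\mathbf{Z}_4)}$ respectively; those powers cancel across the inequality (both sides carry the exponent $|\mathcal{S}_1|+|\mathcal{S}_3|$), so \eqref{eq:fish} is exactly equivalent to
\[
h(\mathbf{Z}_1,\mathbf{Z}_3\mid\mathbf{Z}_2,\mathbf{Z}_4)\ \le\ h(\mathbf{Z}_1\mid\mathbf{Z}_2)+h(\mathbf{Z}_3\mid\mathbf{Z}_4).
\]

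I would prove this by the chain rule $h(\mathbf{Z}_1,\mathbf{Z}_3\mid\mathbf{Z}_2,\mathbf{Z}_4)=h(\mathbf{Z}_1\mid\mathbf{Z}_2,\mathbf{Z}_4)+h(\mathbf{Z}_3\mid\mathbf{Z}_1,\mathbf{Z}_2,\mathbf{Z}_4)$ followed by two applications of the fact that conditioning does not increase differential entropy: $h(\mathbf{Z}_1\mid\mathbf{Z}_2,\mathbf{Z}_4)\le h(\mathbf{Z}_1\mid\mathbf{Z}_2)$ and $h(\mathbf{Z}_3\mid\mathbf{Z}_1,\mathbf{Z}_2,\mathbf{Z}_4)\le h(\mathbf{Z}_3\mid\mathbf{Z}_4)$; summing finishes the bound. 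In matrix terms this amounts to Fischer's inequality for the Schur complement $K/K(\mathcal{S}_2\cup\mathcal{S}_4)$, whose diagonal blocks are the conditional covariances of $\mathbf{Z}_1$ and $\mathbf{Z}_3$ given $(\mathbf{Z}_2,\mathbf{Z}_4)$, together with the fact that enlarging the conditioning set shrinks each of those conditional covariances in the Loewner order and hence shrinks its determinant; this is the Fischer-type inequality the introduction alludes to.

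The equality statement splits into an easy sufficiency part and a harder necessity part. Sufficiency: if $K(\mathcal{S}_1\cup\mathcal{S}_2,\mathcal{S}_3\cup\mathcal{S}_4)=0$ then, since $\mathcal{S}_1\cup\mathcal{S}_2=[1:k_2]$ and $\mathcal{S}_3\cup\mathcal{S}_4=[k_2+1:k]$ are contiguous, $K$ is block diagonal with blocks $K(\mathcal{S}_1\cup\mathcal{S}_2)$ and $K(\mathcal{S}_3\cup\mathcal{S}_4)$, while $K(\mathcal{S}_2\cup\mathcal{S}_4)$ is block diagonal with blocks $K(\mathcal{S}_2)$ and $K(\mathcal{S}_4)$; the resulting identities $\det K=\det K(\mathcal{S}_1\cup\mathcal{S}_2)\det K(\mathcal{S}_3\cup\mathcal{S}_4)$ and $\det K(\mathcal{S}_2\cup\mathcal{S}_4)=\det K(\mathcal{S}_2)\det K(\mathcal{S}_4)$ turn \eqref{eq:fish} into an equality. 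Necessity carries the real work: equality forces each conditioning step to be tight, i.e. $I(\mathbf{Z}_1;\mathbf{Z}_4\mid\mathbf{Z}_2)=0$ and $I(\mathbf{Z}_3;\mathbf{Z}_1,\mathbf{Z}_2\mid\mathbf{Z}_4)=0$, each of which — for jointly Gaussian vectors — says that a certain conditional cross-covariance (a Schur complement of $K$) vanishes; the device for upgrading a tight determinant inequality to an exact matrix identity is that $0\preceq A\preceq B$ with $\det A=\det B$ forces $A=B$. The step I expect to be the main obstacle is the final reconciliation: deducing from these conditional (Schur-complement) identities the single unconditional statement $K(\mathcal{S}_1\cup\mathcal{S}_2,\mathcal{S}_3\cup\mathcal{S}_4)=0$, which is where the sequential structure of the partition has to be exploited in full and which I would check most carefully.
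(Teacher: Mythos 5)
Your derivation of the inequality itself is precisely the paper's proof: Appendix B realizes $K$ as the covariance of a Gaussian vector $(\mathbf{X}_1,\mathbf{X}_2,\mathbf{X}_3,\mathbf{X}_4)$ partitioned along $\mathcal{S}_1,\dots,\mathcal{S}_4$, converts each determinant ratio into a conditional differential entropy, and applies the chain rule $h(\mathbf{X}_1,\mathbf{X}_3|\mathbf{X}_2,\mathbf{X}_4)=h(\mathbf{X}_1|\mathbf{X}_2,\mathbf{X}_4)+h(\mathbf{X}_3|\mathbf{X}_1,\mathbf{X}_2,\mathbf{X}_4)\le h(\mathbf{X}_1|\mathbf{X}_2)+h(\mathbf{X}_3|\mathbf{X}_4)$, exactly as you propose; your bookkeeping of the $(2\pi e)$ powers is also correct. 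So for \eqref{eq:fish} itself there is nothing to add.

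The equality clause is where the real issue lies, and the step you flagged as the ``main obstacle'' is in fact not closable: the ``only if'' direction of the lemma as stated is false. (The paper's appendix silently omits the equality case altogether.) Equality forces exactly the two conditions $I(\mathbf{X}_1;\mathbf{X}_4|\mathbf{X}_2)=0$ and $I(\mathbf{X}_3;\mathbf{X}_1,\mathbf{X}_2|\mathbf{X}_4)=0$, i.e., writing $K_{ij}=K(\mathcal{S}_i,\mathcal{S}_j)$, the Schur-complement identities $K_{14}=K_{12}K_{22}^{-1}K_{24}$, $K_{13}=K_{14}K_{44}^{-1}K_{43}$ and $K_{23}=K_{24}K_{44}^{-1}K_{43}$. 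These do not force $K_{13},K_{14},K_{23},K_{24}$ to vanish: any Gauss--Markov chain $\mathbf{X}_1\to\mathbf{X}_2\to\mathbf{X}_4\to\mathbf{X}_3$ satisfies them with $K_{24}\neq 0$. Concretely, with $k_1=k_2-k_1=k_3-k_2=k-k_3=1$ and $0<t<1$,
\begin{equation*}
K=\begin{bmatrix}1&t&t^3&t^2\\ t&1&t^2&t\\ t^3&t^2&1&t\\ t^2&t&t&1\end{bmatrix}\succ 0,\qquad
\det(K)=(1-t^2)^3,\quad \det(K(\mathcal{S}_2\cup\mathcal{S}_4))=1-t^2,
\end{equation*}
so both sides of \eqref{eq:fish} equal $(1-t^2)^2$, yet $K(\mathcal{S}_1\cup\mathcal{S}_2,\mathcal{S}_3\cup\mathcal{S}_4)\neq 0$. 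Your sufficiency argument (block-diagonality of $K$ and of $K(\mathcal{S}_2\cup\mathcal{S}_4)$) is correct, and sufficiency plus the inequality is all that the achievability argument of Section III actually uses; but the correct characterization of equality is the pair of conditional-independence (Schur-complement) relations above, not the vanishing of the off-diagonal block, so the ``iff'' in the lemma --- and any downstream claim that properness is \emph{necessary} for optimality that leans on it --- needs to be reexamined rather than proved.
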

\begin{proof}
Please refer to Appendix \ref{appendixb}.
\end{proof}
\begin{theorem}
The secrecy capacity of degraded complex WTC is achieved when the signal is proper.
\end{theorem}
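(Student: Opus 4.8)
The plan is to prove the pointwise bound $R_g(\sunderline{K}_{\mathbf{X}})\le R_p(K_{\mathbf{X}})$ for every $\sunderline{K}_{\mathbf{X}}\in\sunderline{\mathcal{K}}_g$, where $K_{\mathbf{X}}$ denotes the top-left block of $\sunderline{K}_{\mathbf{X}}$, with equality if and only if $\tilde{K}_{\mathbf{X}}=0$. Since $\tr(\sunderline{K}_{\mathbf{X}})=2\tr(K_{\mathbf{X}})$ and $\sunderline{K}_{\mathbf{X}}\succeq 0$ forces $K_{\mathbf{X}}\succeq 0$, the proper covariance $K_{\mathbf{X}}$ lies in $\mathcal{K}$; hence $C_g=\max_{\sunderline{\mathcal{K}}_g}R_g\le\max_{\sunderline{\mathcal{K}}_g}R_p(K_{\mathbf{X}})\le\max_{\mathcal{K}}R_p=C_p$, and with \eqref{eq:9} this gives $C_g=C_p$ and shows any optimal $\sunderline{K}_{\mathbf{X}}$ is proper. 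The whole argument then reduces to a single application of Lemma~\ref{lemma:1} to a carefully built positive definite matrix.

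First I would use the degradedness to nest Eve's channel inside Bob's. Let $\check{H}_r\in\mathbb{C}^{(n_t+n_e)\times n_t}$ be the matrix obtained by stacking $\Delta^{1/2}$ on top of $H_e$. By the choice of $\Delta^{1/2}$ one has $\check{H}_r^H\check{H}_r=\Delta+H_e^HH_e=H_r^HH_r$, and hence $\sunderline{\check{H}}_r^H\sunderline{\check{H}}_r=\sunderline{H}_r^H\sunderline{H}_r$, where $\sunderline{\check{H}}_r:=\mathrm{blkdiag}(\check{H}_r,\check{H}_r^*)$. Applying \eqref{eq:m1} twice then gives $\det(I+\sunderline{H}_r\sunderline{K}_{\mathbf{X}}\sunderline{H}_r^H)=\det(I+\sunderline{\check{H}}_r\sunderline{K}_{\mathbf{X}}\sunderline{\check{H}}_r^H)$ and $\det(I+H_rK_{\mathbf{X}}H_r^H)=\det(I+\check{H}_rK_{\mathbf{X}}\check{H}_r^H)$. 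I then set $K:=I_{2(n_t+n_e)}+\sunderline{\check{H}}_r\sunderline{K}_{\mathbf{X}}\sunderline{\check{H}}_r^H$, which is positive definite since $K\succeq I$. Because $\sunderline{\check{H}}_r$ is block-diagonal and $\check{H}_r$ is itself stacked from an $n_t\times n_t$ block and an $n_e\times n_t$ block, $K$ has a natural four-block structure with block sizes $(n_t,n_e,n_t,n_e)$; here \eqref{eq:m2} organizes the expansion and, if needed, \eqref{eq:swap} puts it into sequential order. Taking $\mathcal{S}_1,\dots,\mathcal{S}_4$ to be the corresponding sequential partition, a direct block computation yields
\begin{equation*}
K(\mathcal{S}_2\cup\mathcal{S}_4)=I+\sunderline{H}_e\sunderline{K}_{\mathbf{X}}\sunderline{H}_e^H,\qquad K(\mathcal{S}_1\cup\mathcal{S}_2)=I+\check{H}_rK_{\mathbf{X}}\check{H}_r^H,\qquad K(\mathcal{S}_2)=I+H_eK_{\mathbf{X}}H_e^H,
\end{equation*}
while $K(\mathcal{S}_3\cup\mathcal{S}_4)$ and $K(\mathcal{S}_4)$ are the complex conjugates of $K(\mathcal{S}_1\cup\mathcal{S}_2)$ and $K(\mathcal{S}_2)$, so they have the same positive real determinants. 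The crucial observation is that $K(\mathcal{S}_1\cup\mathcal{S}_2)$ and $K(\mathcal{S}_2)$ depend on $\sunderline{K}_{\mathbf{X}}$ only through $K_{\mathbf{X}}$, whereas the off-diagonal block $K(\mathcal{S}_1\cup\mathcal{S}_2,\mathcal{S}_3\cup\mathcal{S}_4)$ equals $\check{H}_r\tilde{K}_{\mathbf{X}}\check{H}_r^T$.

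Feeding these submatrices into Lemma~\ref{lemma:1} and using \eqref{eq:m1} once more, I obtain
\begin{equation*}
\frac{\det(I_{2n_r}+\sunderline{H}_r\sunderline{K}_{\mathbf{X}}\sunderline{H}_r^H)}{\det(I_{2n_e}+\sunderline{H}_e\sunderline{K}_{\mathbf{X}}\sunderline{H}_e^H)}\;\le\;\left(\frac{\det(I_{n_r}+H_rK_{\mathbf{X}}H_r^H)}{\det(I_{n_e}+H_eK_{\mathbf{X}}H_e^H)}\right)^{\!2},
\end{equation*}
and taking $\tfrac{1}{2}\log$ is precisely $R_g(\sunderline{K}_{\mathbf{X}})\le R_p(K_{\mathbf{X}})$ by \eqref{eq:5} and \eqref{eq:4}. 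By the equality clause of Lemma~\ref{lemma:1}, equality holds if and only if $\check{H}_r\tilde{K}_{\mathbf{X}}\check{H}_r^T=0$; since $\check{H}_r^H\check{H}_r=H_r^HH_r\succ 0$, the matrix $\check{H}_r$ has full column rank, which forces $\tilde{K}_{\mathbf{X}}=0$, i.e.\ the signal is proper. Combined with the reduction of the first paragraph this proves the theorem; and since the optimality of Gaussian, zero-mean signals is already available from prior work, it also yields the sharper statement that, for a fixed covariance, the secrecy rate of the degraded WTC is maximized exactly by the proper signal.

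The step I expect to be the main obstacle is choosing the matrix $K$: one has to spend the degradedness to replace $H_r$ by the nested channel $\check{H}_r$, so that Eve's channel appears as a principal restriction of Bob's and a Fischer-type inequality can separate the two log-determinants. After that, what remains is the somewhat delicate but routine block bookkeeping that identifies the submatrices above and checks that $K(\mathcal{S}_1\cup\mathcal{S}_2)$ involves $\tilde{K}_{\mathbf{X}}$ only through the off-diagonal block; the bound $K\succeq I$ conveniently removes any need for a perturbation or continuity argument when invoking Lemma~\ref{lemma:1}.
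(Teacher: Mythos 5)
Your proposal is correct and follows essentially the same route as the paper: it spends degradedness to build the nested channel $\bigl[\Delta^{1/2};H_e\bigr]$, forms the same $2(n_t+n_e)\times 2(n_t+n_e)$ positive definite matrix $K$ with block sizes $(n_t,n_e,n_t,n_e)$, and applies Lemma~\ref{lemma:1} with the identical partition, then closes with $C_p\le C_g$ from \eqref{eq:9}. The only (welcome) additions are that you verify the equality clause explicitly via the full column rank of $\check{H}_r$ — making the ``only if proper'' direction precise, which the paper's displayed chain leaves implicit — and that your ordering of the blocks avoids the row/column swap \eqref{eq:swap} that the paper needs.
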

\begin{proof}
Let $\mathbf{X}$ be a general Gaussian complex signal with covariance matrix $K_\mathbf{X}$ and pseudo covariance matrix $\tilde{K}_\mathbf{X}$, then for its secrecy rate, we have
\begin{align}
&\qquad \frac{1}{2}\log\det(I_{2n_r}+\sunderline{H}_r\sunderline{K}_{\mathbf{X}}\sunderline{H}_r^H)-\frac{1}{2}\log\det(I_{2n_e}+\sunderline{H}_e\sunderline{K}_{\mathbf{X}}\sunderline{H}_e^H) \nonumber\\
&\overset{(a)}{=} \frac{1}{2}\log\det(I_{2n_t}+\sunderline{H}_r^H\sunderline{H}_r\sunderline{K}_{\mathbf{X}})-\frac{1}{2}\log\det(I_{2n_e}+\sunderline{H}_e\sunderline{K}_{\mathbf{X}}\sunderline{H}_e^H) \nonumber\\
&\overset{(b)}{=} \frac{1}{2}\log\det(I_{2n_t}+(\sunderline{\Delta}+\sunderline{H}_e^H\sunderline{H}_e)\sunderline{K}_{\mathbf{X}})-\frac{1}{2}\log\det(I_{2n_e}+\sunderline{H}_e\sunderline{K}_{\mathbf{X}}\sunderline{H}_e^H) \nonumber\\
&=\frac{1}{2}\log\det\left(I_{2n_t}+\begin{bmatrix}\underline{\Delta}^\frac{H}{2}&\sunderline{H}_e^H\end{bmatrix}\begin{bmatrix}\underline{\Delta}^\frac{1}{2}\\\sunderline{H}_e\end{bmatrix}\sunderline{K}_{\mathbf{X}}\right)-\frac{1}{2}\log\det(I_{2n_e}+\sunderline{H}_e\sunderline{K}_{\mathbf{X}}\sunderline{H}_e^H) \nonumber\\
&\overset{(a)}{=} \frac{1}{2}\log\det\left(I_{2(n_t+n_e)}+\begin{bmatrix}\underline{\Delta}^\frac{1}{2}\\\sunderline{H}_e\end{bmatrix}\sunderline{K}_{\mathbf{X}}\begin{bmatrix}\underline{\Delta}^\frac{H}{2}&\sunderline{H}_e^H\end{bmatrix}\right)-\frac{1}{2}\log\det(I_{2n_e}+\sunderline{H}_e\sunderline{K}_{\mathbf{X}}\sunderline{H}_e^H) \nonumber\\
&\overset{(c)}{=} \frac{1}{2}\log\det\left(I_{2(n_t+n_e)}+\begin{bmatrix}\underline{\Delta}^\frac{1}{2}&0\\0&\sunderline{H}_e\end{bmatrix}\begin{bmatrix}\sunderline{K}_{\mathbf{X}}&\sunderline{K}_{\mathbf{X}}\\\sunderline{K}_{\mathbf{X}}&\sunderline{K}_{\mathbf{X}}\end{bmatrix}\begin{bmatrix}\underline{\Delta}^\frac{H}{2}&0\\0&\sunderline{H}_e^H\end{bmatrix}\right)\nonumber \\
&\qquad-\frac{1}{2}\log\det(I_{2n_e}+\sunderline{H}_e\sunderline{K}_{\mathbf{X}}\sunderline{H}_e^H) \nonumber\\
&=\frac{1}{2}\log\det\left( I_{2(n_t+n_e)}+\begin{bmatrix}\Delta^\frac{1}{2}&0&0&0\\0&(\Delta^\frac{1}{2})^*&0&0\\0&0&H_e&0\\0&0&0&H_e^*\end{bmatrix}\begin{bmatrix}K_{\mathbf{X}}&\tilde{K}_{\mathbf{X}}&K_{\mathbf{X}}&\tilde{K}_{\mathbf{X}}\\\tilde{K}_{\mathbf{X}}^*&K_{\mathbf{X}}^*&\tilde{K}_{\mathbf{X}}^*&K_{\mathbf{X}}^*\\K_{\mathbf{X}}&\tilde{K}_{\mathbf{X}}&K_{\mathbf{X}}&\tilde{K}_{\mathbf{X}}\\\tilde{K}_{\mathbf{X}}^*&K_{\mathbf{X}}^*&\tilde{K}_{\mathbf{X}}^*&K_{\mathbf{X}}^*\end{bmatrix}\begin{bmatrix}\Delta^\frac{H}{2}&0&0&0\\0&\Delta^\frac{T}{2}&0&0\\0&0&H_e^H&0\\0&0&0&H_e^T\end{bmatrix} \right)\nonumber\\
 &\qquad-\frac{1}{2}\log\det\left(I_{2n_e}+\begin{bmatrix} H_e&0\\0&H_e^*\end{bmatrix}
 \begin{bmatrix}
     K_\mathbf{X} & \tilde{K}_\mathbf{X}\\
    \tilde{K}_\mathbf{X}^* & K_\mathbf{X}^*
 \end{bmatrix} 
 \begin{bmatrix}H_e^H&0\\0&H_e^T \end{bmatrix}
 \right)\nonumber\\
&\overset{(d)}{=}\frac{1}{2}\log\det\left( I_{2(n_t+n_e)}+\begin{bmatrix}\Delta^\frac{1}{2}&0&0&0\\0&H_e&0&0\\0&0&(\Delta^\frac{1}{2})^*&0\\0&0&0&H_e^*\end{bmatrix}
 \begin{bmatrix}
 K_{\mathbf{X}}&K_{\mathbf{X}}&\tilde{K}_{\mathbf{X}}&\tilde{K}_{\mathbf{X}}\\
 K_{\mathbf{X}}&K_{\mathbf{X}}&\tilde{K}_{\mathbf{X}}&\tilde{K}_{\mathbf{X}}\\
 \tilde{K}_{\mathbf{X}}^*&\tilde{K}_{\mathbf{X}}^*&K_{\mathbf{X}}^*&K_{\mathbf{X}}^*\\
 \tilde{K}_{\mathbf{X}}^*&\tilde{K}_{\mathbf{X}}^*&K_{\mathbf{X}}^*&K_{\mathbf{X}}^*
 \end{bmatrix}\begin{bmatrix}\Delta^\frac{H}{2}&0&0&0\\0&H_e^H&0&0\\0&0&\Delta^\frac{T}{2}&0\\0&0&0&H_e^T\end{bmatrix} \right)\nonumber\\
 &\qquad-\frac{1}{2}\log\det\left(I_{2n_e}+\begin{bmatrix} H_e&0\\0&H_e^*\end{bmatrix}
 \begin{bmatrix}
     K_\mathbf{X} & \tilde{K}_\mathbf{X}\\
    \tilde{K}_\mathbf{X}^* & K_\mathbf{X}^*
 \end{bmatrix} 
 \begin{bmatrix}H_e^H&0\\0&H_e^T \end{bmatrix}
 \right),
\end{align} 
where (a) follows by \eqref{eq:m1}, (b) follows by \eqref{eq:delta_}, (c) follows by \eqref{eq:m2} and (d) follows by \eqref{eq:swap}.
Next, denote $K$ as 
\begin{align}
K=I_{2(n_t+n_e)}+\begin{bmatrix}\Delta^\frac{1}{2}&0&0&0\\0&H_e&0&0\\0&0&(\Delta^\frac{1}{2})^*&0\\0&0&0&H_e^*\end{bmatrix}
 \begin{bmatrix}
 K_{\mathbf{X}}&K_{\mathbf{X}}&\tilde{K}_{\mathbf{X}}&\tilde{K}_{\mathbf{X}}\\
 K_{\mathbf{X}}&K_{\mathbf{X}}&\tilde{K}_{\mathbf{X}}&\tilde{K}_{\mathbf{X}}\\
 \tilde{K}_{\mathbf{X}}^*&\tilde{K}_{\mathbf{X}}^*&K_{\mathbf{X}}^*&K_{\mathbf{X}}^*\\
 \tilde{K}_{\mathbf{X}}^*&\tilde{K}_{\mathbf{X}}^*&K_{\mathbf{X}}^*&K_{\mathbf{X}}^*
 \end{bmatrix}\begin{bmatrix}\Delta^\frac{H}{2}&0&0&0\\0&H_e^H&0&0\\0&0&\Delta^\frac{T}{2}&0\\0&0&0&H_e^T\end{bmatrix},
\end{align}
and divide $K$'s index set $[1:2(n_t+n_e)]$ into $\mathcal{S}_1=[1:n_t],\mathcal{S}_2=[n_t+1:n_t+n_e],\mathcal{S}_3=[n_t+n_e+1:2n_t+n_e]$ and $\mathcal{S}_4=[2n_t+n_e+1:2(n_t+n_e)]$, then we have
\begin{align}
&\quad\frac{1}{2}\log\det\left( I_{2(n_t+n_e)}+\begin{bmatrix}\Delta^\frac{1}{2}&0&0&0\\0&H_e&0&0\\0&0&(\Delta^\frac{1}{2})^*&0\\0&0&0&H_e^*\end{bmatrix}
 \begin{bmatrix}
 K_{\mathbf{X}}&K_{\mathbf{X}}&\tilde{K}_{\mathbf{X}}&\tilde{K}_{\mathbf{X}}\\
 K_{\mathbf{X}}&K_{\mathbf{X}}&\tilde{K}_{\mathbf{X}}&\tilde{K}_{\mathbf{X}}\\
 \tilde{K}_{\mathbf{X}}^*&\tilde{K}_{\mathbf{X}}^*&K_{\mathbf{X}}^*&K_{\mathbf{X}}^*\\
 \tilde{K}_{\mathbf{X}}^*&\tilde{K}_{\mathbf{X}}^*&K_{\mathbf{X}}^*&K_{\mathbf{X}}^*
 \end{bmatrix}\begin{bmatrix}\Delta^\frac{H}{2}&0&0&0\\0&H_e^H&0&0\\0&0&\Delta^\frac{T}{2}&0\\0&0&0&H_e^T\end{bmatrix} \right)\nonumber\\
 &\qquad-\frac{1}{2}\log\det\left(I_{2n_e}+\begin{bmatrix} H_e&0\\0&H_e^*\end{bmatrix}
 \begin{bmatrix}
     K_\mathbf{X} & \tilde{K}_\mathbf{X}\\
    \tilde{K}_\mathbf{X}^* & K_\mathbf{X}^*
 \end{bmatrix} 
 \begin{bmatrix}H_e^H&0\\0&H_e^T \end{bmatrix}
 \right)\nonumber\\
 &=\frac{1}{2}\log\det(K)-\frac{1}{2}\log\det(K(\mathcal{S}_2 \cup \mathcal{S}_4))]\nonumber\\
 &\overset{(a)}{\leq}  \frac{1}{2}\log\det(K(\mathcal{S}_1 \cup \mathcal{S}_2)K(\mathcal{S}_3 \cup \mathcal{S}_4))-\frac{1}{2}\log\det(K(\mathcal{S}_2)K(\mathcal{S}_4))\nonumber\\
 &\overset{(b)}{=}\log\det(K(\mathcal{S}_1 \cup \mathcal{S}_2))-\log\det(K(\mathcal{S}_2))\nonumber\\
 &=\log\det\left(I_{n_r+n_t}+
\begin{bmatrix}
\Delta^\frac{1}{2}&0\\0&H_e
\end{bmatrix}
\begin{bmatrix}
K_\mathbf{X}&K_\mathbf{X}\\K_\mathbf{X}&K_\mathbf{X}
\end{bmatrix}
\begin{bmatrix}
\Delta^\frac{H}{2}&0\\0&H_e^H
\end{bmatrix}\right)-\log\det\left(I_{n_e}+H_e K_\mathbf{X}H_e^H\right),
\end{align}
where (a) follows by \eqref{eq:fish} and (b) follows since $K(\mathcal{S}_1 \cup \mathcal{S}_2)$ is the transpose of $K(\mathcal{S}_3 \cup \mathcal{S}_4)$ and $K(\mathcal{S}_2)$ is the transpose of $K(\mathcal{S}_4)$, so we have $\det(K(\mathcal{S}_1 \cup \mathcal{S}_2)=\det(K(\mathcal{S}_3 \cup \mathcal{S}_4))$ and $\det(K(\mathcal{S}_2))=\det(K(\mathcal{S}_4))$. 
Finally, we use the matrix equalities again so as to obtain
\begin{align}
&\qquad \frac{1}{2}\log\det(I_{2n_r}+\sunderline{H}_r\sunderline{K}_{\mathbf{X}}\sunderline{H}_r^H)-\frac{1}{2}\log\det(I_{2n_e}+\sunderline{H}_e\sunderline{K}_{\mathbf{X}}\sunderline{H}_e^H) \nonumber\\
&\leq  \log\det\left(I_{n_r+n_t}+
\begin{bmatrix}
\Delta^\frac{1}{2}&0\\0&H_e
\end{bmatrix}
\begin{bmatrix}
K_\mathbf{X}&K_\mathbf{X}\\K_\mathbf{X}&K_\mathbf{X}
\end{bmatrix}
\begin{bmatrix}
\Delta^\frac{H}{2}&0\\0&H_e^H
\end{bmatrix}\right)-\log\det\left(I_{n_e}+H_e K_\mathbf{X}H_e^H\right)\nonumber\\
&\overset{(a)}{=}\log\det\left(I_{n_r+n_t}+
\begin{bmatrix}
\Delta^\frac{1}{2}\\H_e
\end{bmatrix}
K_\mathbf{X}
\begin{bmatrix}
\Delta^\frac{H}{2}&H_e^H
\end{bmatrix}\right)-\log\det\left(I_{n_e}+H_e K_\mathbf{X}H_e^H\right)\nonumber\\
&\overset{(b)}{=}\log\det\left(I_{n_t}+
\begin{bmatrix}
\Delta^\frac{H}{2}&H_e^H
\end{bmatrix}
\begin{bmatrix}
\Delta^\frac{1}{2}\\H_e
\end{bmatrix}
K_\mathbf{X}
\right)-\log\det\left(I_{n_e}+H_e K_\mathbf{X}H_e^H\right)\nonumber\\
&=\log\det(I_{n_t}+ (\Delta+H_e^H H_e) K_{\mathbf{X}})-\log\det(I_{n_e}+H_e K_{\mathbf{X}} H_e^H)\nonumber\\
&\overset{(c)}{=}\log\det(I_{n_t}+ H_r^H H_r K_{\mathbf{X}})-\log\det(I_{n_e}+H_e K_{\mathbf{X}} H_e^H)\nonumber\\
&\overset{(b)}{=}\log\det(I_{n_r}+H_r K_{\mathbf{X}} H_r^H)-\log\det(I_{n_e}+H_e K_{\mathbf{X}} H_e^H),
\end{align}
for any $K_\mathbf{X} \in \sunderline{K}$. In the proof, (a) follows by \eqref{eq:m2}, (b) follows by \eqref{eq:m1} and (c) follows by \eqref{eq:delta}. As a result, if $\mathbf{X}$ is Gaussian with a given covariance matrix $K_\mathbf{X}$, its secrecy rate is maximized if $\mathbf{X}$ is proper, from which we get
\begin{align}\label{eq:20}
C_g\leq C_s.
\end{align}
\eqref{eq:20} combined with \eqref{eq:9} can yield 
\begin{align}
C_g=C_s.
\end{align}
We have proven that proper signals are capacity achieving for the degraded complex MIMO wiretap channel. 
\begin{remark}
From the perspective of mathematical optimization, we have
\begin{align}\label{eq:22}
&\mathop{\max}_{\sunderline{K}_{\mathbf{X}}\in \mathcal{\sunderline{K}}_g}[ \frac{1}{2}\log\det(I_{2n_t}+(\sunderline{\Delta}+\sunderline{H}_e^H\sunderline{H}_e)\sunderline{K}_{\mathbf{X}})-\frac{1}{2}\log\det(I_{2n_e}+\sunderline{H}_e\sunderline{K}_{\mathbf{X}}\sunderline{H}_e^H)]\nonumber\\
=&\mathop{\max}_{\sunderline{K}_{\mathbf{X}}\in \mathcal{\sunderline{K}}_p}[ \frac{1}{2}\log\det(I_{2n_t}+(\sunderline{\Delta}+\sunderline{H}_e^H\sunderline{H}_e)\sunderline{K}_{\mathbf{X}})-\frac{1}{2}\log\det(I_{2n_e}+\sunderline{H}_e\sunderline{K}_{\mathbf{X}}\sunderline{H}_e^H)],
\end{align}
which is more than just the equal optimal value of these two optimization problems with the same objective function and different feasible set. In fact, since the objective function is convex with an unique optimal point in set $\mathcal{\sunderline{K}}_g$, we have proven that the optimal $\sunderline{K}_\mathbf{X}$ can be found in the subset $\mathcal{\sunderline{K}}_p \subseteq \mathcal{\sunderline{K}}_g$.
\end{remark}
\end{proof}

\section{Optimality of the Proper Signal for the General Complex MIMO WTC}\label{sec4}
In this section, we rely on the min-max equivalent reformulation of the secrecy capacity and UDL decomposition in \cite{oggierSecrecyCapacityMIMO2011}. With these two skills, we can utilize the result of the degraded WTC in the previous section so as to prove the optimality of the proper signal for the general MIMO WTC.

\begin{lemma}
The secrecy capacity of a general MIMO WTC can be equivalently expressed in the form of a min-max optimization problem as
\begin{align}
    C_g=\mathop{\min}_{\sunderline{Q} \in \mathcal{\sunderline{Q}}}\mathop{\max}_{\sunderline{K}_{\mathbf{X}}\in \mathcal{\sunderline{K}}} \left[ \frac{1}{2}\log\det(I_{2(n_r+n_e)}+\sunderline{Q}^{-1}\sunderline{H}\sunderline{K}_{\mathbf{X}}\sunderline{H}^H)-\frac{1}{2}\log\det(I_{2n_e}+\sunderline{H}_e\sunderline{K}_{\mathbf{X}}\sunderline{H}_e^H) \right],
\end{align}
where $\sunderline{H}=[\sunderline{H}_r^T,\sunderline{H}_e^T]^T$ and the feasible set $\mathcal{\sunderline{Q}}$ is defined as
\begin{align}
\mathcal{\sunderline{Q}}=\{\sunderline{Q}|\sunderline{Q}=\begin{bmatrix}I_{2n_r}&\sunderline{A}\\\sunderline{A}^H&I_{2n_e}\end{bmatrix}\succ 0\},
\end{align}
which is the covariance matrix defined as
\begin{align}
\sunderline{Q}=\mathbb{E}\left\{\begin{bmatrix}N_r\\N_r^*\\N_e\\N_e^*\end{bmatrix}\begin{bmatrix}N_r\\N_r^*\\N_e\\N_e^*\end{bmatrix}^H\right\},
\end{align}
so $\sunderline{A}$ is the correlation between augmented noises $\sunderline{N}_r=\begin{bmatrix}N_r^H&N_r^T\end{bmatrix}^H$ and $\sunderline{N}_e=\begin{bmatrix}N_e^H&N_e^T\end{bmatrix}^H$ as
\begin{align}
\sunderline{A}=\mathbb{E}\{\sunderline{N}_r \sunderline{N}_e^H\}=\begin{bmatrix}
\mathbb{E}\{N_r N_e^H\}&\mathbb{E}\{N_r N_e^T\}\\\mathbb{E}\{N_r^* N_e^H\}&\mathbb{E}\{N_r^* N_e^T\}
\end{bmatrix}\triangleq\begin{bmatrix}
A&B\\B*&A^*
\end{bmatrix}.
\end{align}
\end{lemma}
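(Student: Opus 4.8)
The plan is to establish the claimed min-max identity by transplanting the worst-case (Sato-type) noise reformulation of the Gaussian MIMO WTC secrecy capacity from \cite{oggierSecrecyCapacityMIMO2011} into the augmented-matrix language of Theorem~\ref{thm:1}; throughout, write $g(\sunderline{Q},\sunderline{K}_{\mathbf{X}})$ for the bracketed objective in the statement. Two facts drive the argument. First, by the equivalent real WTC of Appendix~\ref{appendixa}, the secrecy rate $R_g(\sunderline{K}_{\mathbf{X}})$ of a Gaussian input equals $I(\mathbf{X};\mathbf{Y}_r)-I(\mathbf{X};\mathbf{Y}_e)$ in that real channel, and each of these two mutual informations depends only on the marginal conditional law of $\mathbf{Y}_r$ given $\mathbf{X}$ (respectively of $\mathbf{Y}_e$ given $\mathbf{X}$) -- hence only on the fixed marginal laws of $\mathbf{N}_r$ and $\mathbf{N}_e$, and never on any correlation between the two noises. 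Consequently $C_g$ is unchanged if one lets $(\mathbf{N}_r,\mathbf{N}_e)$ be jointly Gaussian with an arbitrary admissible augmented noise covariance $\sunderline{Q}\in\mathcal{\sunderline{Q}}$, whose diagonal blocks remain $I_{2n_r}$ and $I_{2n_e}$. Second, under any such joint noise law the combined observation $(\mathbf{Y}_r,\mathbf{Y}_e)$ has, for a Gaussian input, mutual information $\frac{1}{2}\log\det(I_{2(n_r+n_e)}+\sunderline{Q}^{-1}\sunderline{H}\sunderline{K}_{\mathbf{X}}\sunderline{H}^{H})$ with $\sunderline{H}=[\sunderline{H}_r^{T},\sunderline{H}_e^{T}]^{T}$; this is the mutual information of the combined Gaussian channel computed in the equivalent real domain and then re-expressed via the real-to-augmented correspondence already used in Theorem~\ref{thm:1}.

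For the inequality $C_g\le\min_{\sunderline{Q}}\max_{\sunderline{K}_{\mathbf{X}}}g$, fix any $\sunderline{Q}\in\mathcal{\sunderline{Q}}$. Since handing the legitimate receiver both received vectors can only enlarge its mutual information, $R_g(\sunderline{K}_{\mathbf{X}})=I(\mathbf{X};\mathbf{Y}_r)-I(\mathbf{X};\mathbf{Y}_e)\le I(\mathbf{X};\mathbf{Y}_r,\mathbf{Y}_e)-I(\mathbf{X};\mathbf{Y}_e)$, and by the second fact the right-hand side is exactly $g(\sunderline{Q},\sunderline{K}_{\mathbf{X}})$. Maximizing the left side over all admissible augmented covariances turns it into $C_g$ (Theorem~\ref{thm:1}), so $C_g\le\max_{\sunderline{K}_{\mathbf{X}}}g(\sunderline{Q},\sunderline{K}_{\mathbf{X}})$; as $\sunderline{Q}\in\mathcal{\sunderline{Q}}$ was arbitrary, the claimed upper bound follows.

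For the reverse inequality, let $\sunderline{K}_{\mathbf{X}}^{\star}$ attain $C_g$ in Theorem~\ref{thm:1}, and let $\sunderline{Q}^{\star}$ be the worst-case noise for this input, i.e.\ the minimizer over the compact closure of $\mathcal{\sunderline{Q}}$ of $\sunderline{Q}\mapsto\frac{1}{2}\log\det(I+\sunderline{Q}^{-1}\sunderline{H}\sunderline{K}_{\mathbf{X}}^{\star}\sunderline{H}^{H})$, a minimization that is convex (the diagonal blocks of $\sunderline{Q}$ being held fixed). Factoring $\sunderline{Q}^{\star}$ by the UDL decomposition used in \cite{oggierSecrecyCapacityMIMO2011} exposes the degraded channel it induces -- on the support of $\sunderline{K}_{\mathbf{X}}^{\star}$, $\mathbf{Y}_e$ becomes a noisier version of $\mathbf{Y}_r$ -- from which one reads off two things: (a) the enhancement bound is tight at $\sunderline{K}_{\mathbf{X}}^{\star}$, so $g(\sunderline{Q}^{\star},\sunderline{K}_{\mathbf{X}}^{\star})=R_g(\sunderline{K}_{\mathbf{X}}^{\star})=C_g$; and (b) the stationarity (KKT) conditions of the original maximization and of the worst-case-noise minimization are mutually compatible, so $\sunderline{K}_{\mathbf{X}}^{\star}$ also maximizes $g(\sunderline{Q}^{\star},\cdot)$ over all admissible augmented covariances. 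Then $(\sunderline{Q}^{\star},\sunderline{K}_{\mathbf{X}}^{\star})$ is a saddle point, whence $\min_{\sunderline{Q}}\max_{\sunderline{K}_{\mathbf{X}}}g=g(\sunderline{Q}^{\star},\sunderline{K}_{\mathbf{X}}^{\star})=C_g$; combined with the upper bound of the previous paragraph, this is the asserted identity.

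The hard part is step (b). Although $g(\sunderline{Q},\sunderline{K}_{\mathbf{X}})$ is convex in $\sunderline{Q}$, it is \emph{not} concave in $\sunderline{K}_{\mathbf{X}}$ -- only the first $\log\det$ term is concave, while the subtracted $\log\det(I_{2n_e}+\sunderline{H}_e\sunderline{K}_{\mathbf{X}}\sunderline{H}_e^{H})$ is convex in $\sunderline{K}_{\mathbf{X}}$ -- so no off-the-shelf minimax theorem applies and the equality cannot be obtained cheaply; the saddle point must be produced by hand, by matching the two sets of optimality conditions, and this is exactly where the UDL machinery of \cite{oggierSecrecyCapacityMIMO2011}, carried over verbatim to the augmented complex matrices $\sunderline{H}$, $\sunderline{H}_e$, $\sunderline{Q}$, $\sunderline{K}_{\mathbf{X}}$, does the work. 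A secondary point to verify is that the minimizing $\sunderline{Q}^{\star}$ actually lies in $\mathcal{\sunderline{Q}}$ (is nonsingular); should it reach the boundary, a routine limiting argument -- again supplied by the UDL construction -- restores attainment of the outer minimum.
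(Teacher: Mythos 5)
Your route is genuinely different from the paper's. The paper's proof is a two-step transplant: it takes Oggier et al.'s min-max reformulation of $C_p$, observes that the same proof carries over verbatim to a real-valued WTC, identifies the general complex WTC with a real WTC of doubled dimension (exactly as in Appendix~\ref{appendixa}), and then rewrites the resulting real expression in augmented-complex form via the unitary correspondence $M_{n_t}$. You instead work directly in the augmented domain. Your upper bound is more self-contained and more informative than the paper's sketch: the genie argument $I(\mathbf{X};\mathbf{Y}_r)-I(\mathbf{X};\mathbf{Y}_e)\le I(\mathbf{X};\mathbf{Y}_r,\mathbf{Y}_e)-I(\mathbf{X};\mathbf{Y}_e)=g(\sunderline{Q},\sunderline{K}_{\mathbf{X}})$, valid for every joint noise law consistent with the marginals because the secrecy rate depends only on the marginals, is exactly the right reason the min-max expression dominates $C_g$, and it is correct as stated. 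For the reverse inequality both you and the paper ultimately lean on Oggier et al.'s worst-case-noise/UDL machinery without reproducing it, so your level of rigor there matches the paper's.

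Two points need attention. First, your claim that $g(\sunderline{Q},\cdot)$ is not concave in $\sunderline{K}_{\mathbf{X}}$, so that ``no off-the-shelf minimax theorem applies,'' is wrong: after the UDL factorization the objective takes the form $\frac{1}{2}\log\det(I+(\sunderline{\Delta}+\sunderline{H}_e^H\sunderline{H}_e)\sunderline{K}_{\mathbf{X}})-\frac{1}{2}\log\det(I+\sunderline{H}_e\sunderline{K}_{\mathbf{X}}\sunderline{H}_e^H)$ with $\sunderline{\Delta}\succeq 0$, and a difference of this nested (degraded) form \emph{is} concave in $\sunderline{K}_{\mathbf{X}}$ --- this is precisely the convex--concave structure the paper invokes, citing \cite[Proposition 5]{oggierSecrecyCapacityMIMO2011}, immediately after the lemma in order to swap min and max. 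Consequently the genuinely hard step is not manufacturing a saddle point against non-concavity but establishing tightness of the genie bound at the optimum, i.e.\ your point (a); your point (b) mislocates the difficulty, though it does not break the structure of the argument. Second, by working directly in the augmented domain you take on an obligation the paper's real-domain detour avoids: you must check that the minimizing $\sunderline{Q}^{\star}$ produced by the KKT matching respects the conjugate-paired block structure $\sunderline{A}=\bigl[\begin{smallmatrix}A&B\\B^*&A^*\end{smallmatrix}\bigr]$, i.e.\ is realizable as the augmented cross-covariance of actual complex noises. In the real formulation every admissible real cross-covariance is allowed and maps onto exactly this structured set under the real-to-augmented correspondence, so the constraint is satisfied for free; in your formulation it is an unverified assumption.
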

\begin{proof}(Sketch)
The proof is similar to the proof of Theorem \ref{thm:1}. Firstly, in \cite{oggierSecrecyCapacityMIMO2011} Oggier et al. give the min-max reformulation for \eqref{eq:4} as
\begin{align}\label{eq:27}
C_p=\mathop{\min}_{Q \in \mathcal{Q}}\mathop{\max}_{K_\mathbf{x} \in \mathcal{K}}\left[ \log\det(I_{n_r+n_e}+Q^{-1}H K_{\mathbf{X}}H^H)-\log\det(I_{n_e}+H_e K_{\mathbf{X}}H_e^H) \right].
\end{align}
where $H=[H_r^T,H_e^T]^T$ and the feasible set $\mathcal{Q}$ is defined as
\begin{align}
\mathcal{Q}=\{Q|Q=\begin{bmatrix}I_{n_r}&A\\A^H&I_{n_e}\end{bmatrix}\succ 0\},
\end{align}
which is the covariance matrix defined as
\begin{align}
Q=\mathbb{E}\left\{\begin{bmatrix}N_r\\N_e\end{bmatrix}\begin{bmatrix}N_r\\N_e\end{bmatrix}^H\right\},
\end{align}
so $A$ is the correlation between noises $N_r$ and $N_e$ as $A=\mathbb{E}\{N_r N_e^H\}$.
Then we can find the corresponding result for real signals. Since we can view the general complex signal as a real signal composed of the real part and imaginary part of the original complex signal, the min-max reformulation for the general complex WTC can be obtained.  
\end{proof}
An UDL factorization method for $\sunderline{Q}$ was put forward in \cite[equation (55)]{oggierSecrecyCapacityMIMO2011}:
\begin{align}
\begin{bmatrix}I_{2n_r}&\sunderline{A}\\\sunderline{A}^H&I_{2n_e}\end{bmatrix}=\begin{bmatrix}I_{2n_r}&\sunderline{A}\\0&I_{2n_e}\end{bmatrix}
\begin{bmatrix}I_{2n_r}-\sunderline{A}\sunderline{A}^H&0\\0&I_{2n_e}\end{bmatrix}
\begin{bmatrix}I_{2n_r}&0\\\sunderline{A}^H&I_{2n_e}\end{bmatrix},
\end{align}
so that
\begin{align}
\begin{bmatrix}I_{2n_r}&\sunderline{A}\\\sunderline{A}^H&I_{2n_e}\end{bmatrix}^{-1}=\begin{bmatrix}I_{2n_r}&0\\-\sunderline{A}^H&I_{2n_e}\end{bmatrix}
\begin{bmatrix}(I_{2n_r}-\sunderline{A}\sunderline{A}^H)^{-1}&0\\0&I_{2n_e}\end{bmatrix}
\begin{bmatrix}I_{2n_r}&\sunderline{A}\\0&I_{2n_e}\end{bmatrix},
\end{align}
from which we get
\begin{align}\label{eq:32}
    C_g&=\mathop{\min}_{\sunderline{A} \in \mathcal{\sunderline{A}}}\mathop{\max}_{\sunderline{K}_{\mathbf{X}}\in \mathcal{\sunderline{K}}}C_g(\sunderline{A},\sunderline{K}_{\sunderline{K}_\mathbf{X}})\nonumber\\
    &=\mathop{\min}_{\sunderline{A} \in \mathcal{\sunderline{A}}}\mathop{\max}_{\sunderline{K}_{\mathbf{X}}\in \mathcal{\sunderline{K}}} [ \frac{1}{2}\log\det[I_{2n_t}+((\sunderline{H}_r^H-\sunderline{H}_e^H \sunderline{A}^H)(I_{2n_r}-\sunderline{A}\sunderline{A}^H)^{-1}(\sunderline{H}_r-\sunderline{A}\sunderline{H}_e)+\sunderline{H}_e^H\sunderline{H}_e)\sunderline{K}_{\mathbf{X}}]\nonumber\\
    &\quad-\frac{1}{2}\log\det(I_{2n_e}+\sunderline{H}_e\sunderline{K}_{\mathbf{X}}\sunderline{H}_e^H) ],
\end{align}
where the feasible set $\mathcal{\sunderline{A}}$ is defined as
\begin{align}
\mathcal{\sunderline{A}}=\{\sunderline{A}|I_{2n_r}-\sunderline{A}\sunderline{A}^H \succ 0\}.
\end{align}
Since the objective function in \eqref{eq:32} is convex in $\sunderline{A}$ and concave in $\sunderline{K}_{\mathbf{X}}$, the min-max is equal to the max-min and the optimum is a saddle point \cite[Proposition 5]{oggierSecrecyCapacityMIMO2011}.

If the feasible set $\mathcal{\sunderline{A}}$ is contracted to a smaller set $\mathcal{\sunderline{A}}'$ as
\begin{align}
\mathcal{\sunderline{A}}'=\left\{\sunderline{A}=\begin{bmatrix}
A&0\\0&A^*
\end{bmatrix}|I_{2n_r}-\sunderline{A}\sunderline{A}^H \succ 0\right\},
\end{align}
then the minimization in the set $\mathcal{\sunderline{A}}'$ will yield an upper bound of $C_g$. Denote the optimal $\sunderline{K}_\mathbf{X}$ by $\sunderline{K}_\mathbf{X}^*$, then we have
\begin{align}
C_g&=\mathop{\max}_{\sunderline{K}_{\mathbf{X}}\in \mathcal{\sunderline{K}}}\mathop{\min}_{\sunderline{A} \in \mathcal{\sunderline{A}}}C_g(\sunderline{A},\sunderline{K}_\mathbf{X})\nonumber\\
&=\mathop{\min}_{\sunderline{A} \in \mathcal{\sunderline{A}}}C_g(\sunderline{A},\sunderline{K}_\mathbf{X}^*)\nonumber\\
&\leq \mathop{\min}_{\sunderline{A} \in \mathcal{\sunderline{A}}'}C_g(\sunderline{A},\sunderline{K}_\mathbf{X}^*)\nonumber\\
&\overset{(a)}{\leq} \mathop{\max}_{\sunderline{K}_{\mathbf{X}}\in \mathcal{\sunderline{K}}}\mathop{\min}_{\sunderline{A} \in \mathcal{\sunderline{A}}'}C_g(\sunderline{A},\sunderline{K}_\mathbf{X}),
\end{align}
where (a) follows since contracting the feasible set $\mathcal{\sunderline{A}}$ to $\mathcal{\sunderline{A}}'$ may lead to a new optimal $\sunderline{K}_\mathbf{X}$ rather than $\sunderline{K}_\mathbf{X}^*$. Set $\mathcal{\sunderline{A}}$ is still a convex set, so the max-min is equal to min-max. It means
\begin{align}\label{eq:mm}
C_g& \leq \mathop{\min}_{\sunderline{A} \in \mathcal{\sunderline{A}}'}\mathop{\max}_{\sunderline{K}_{\mathbf{X}}\in \mathcal{\sunderline{K}}} C_g(\sunderline{A},\sunderline{K}_\mathbf{X})\nonumber\\
&=\mathop{\min}_{\sunderline{A} \in \mathcal{\sunderline{A}}'}\mathop{\max}_{\sunderline{K}_{\mathbf{X}}\in \mathcal{\sunderline{K}}} [ \frac{1}{2}\log\det[I_{2n_t}+((\sunderline{H}_r^H-\sunderline{H}_e^H \sunderline{A}^H)(I-\sunderline{A}\sunderline{A}^H)^{-1}(\sunderline{H}_r-\sunderline{A}\sunderline{H}_e)+\sunderline{H}_e^H\sunderline{H}_e)\sunderline{K}_{\mathbf{X}}]\nonumber\\
&\quad-\frac{1}{2}\log\det(I_{2n_e}+\sunderline{H}_e\sunderline{K}_{\mathbf{X}}\sunderline{H}_e^H)].
\end{align}
As each of the $\sunderline{A} \in \mathcal{\sunderline{A}}'$ is a block diagonal matrix so
\begin{align}\label{eq:d}
&\quad(\sunderline{H}_r^H-\sunderline{H}_e^H \sunderline{A}^H)(I-\sunderline{A}\sunderline{A}^H)^{-1}(\sunderline{H}_r-\sunderline{A}\sunderline{H}_e)\nonumber\\
&=\begin{bmatrix}
(H_r^H-H_e^HA^H)(I_{n_R}-AA^H)^{-1}(H_r-AH_e)&0\\
0&(H_r^T-H_e^TA^T)(I_{n_R}-A^*A^T)^{-1}(H_r^*-A^*H_e^*)
\end{bmatrix}\nonumber\\
&=\sunderline{\Delta}
\end{align}
is also a block diagonal matrix. In addition, \eqref{eq:d} is positive definite, hence we can denote it by $\sunderline{\Delta}$ with the form of
\begin{align}
\sunderline{\Delta}=\begin{bmatrix}
\Delta&0\\
0&\Delta^*
\end{bmatrix},
\end{align}
where $\sunderline{\Delta}$ and $\Delta$ are both positive definite with the existence of $\sunderline{\Delta}^{\frac{1}{2}}$ and $\Delta^{\frac{1}{2}}$. Now we can use the result of the degraded WTCs and get the following theorem.
\begin{theorem}
The secrecy capacity of general complex WTC is achieved when the signal is proper.
\end{theorem}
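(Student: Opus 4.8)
The plan is to prove the reverse inequality $C_g\le C_p$; combined with \eqref{eq:9} this forces $C_g=C_p$, which is exactly the claim that a proper Gaussian signal attains the secrecy capacity of the general complex WTC. Much of the structural work is already in place above: for every block-diagonal correlation $\sunderline{A}=\begin{bmatrix}A&0\\0&A^*\end{bmatrix}\in\mathcal{\sunderline{A}}'$ the UDL step produces, via \eqref{eq:d}, a block-diagonal positive-definite residual $\sunderline{\Delta}=\begin{bmatrix}\Delta&0\\0&\Delta^*\end{bmatrix}$, and the contraction argument culminating in \eqref{eq:mm} gives
\[
C_g\le\mathop{\min}_{\sunderline{A}\in\mathcal{\sunderline{A}}'}\ \mathop{\max}_{\sunderline{K}_{\mathbf{X}}\in\mathcal{\sunderline{K}}}\Big[\tfrac12\log\det\big(I_{2n_t}+(\sunderline{\Delta}+\sunderline{H}_e^H\sunderline{H}_e)\sunderline{K}_{\mathbf{X}}\big)-\tfrac12\log\det\big(I_{2n_e}+\sunderline{H}_e\sunderline{K}_{\mathbf{X}}\sunderline{H}_e^H\big)\Big].
\]

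First I would observe that, for each fixed $\sunderline{A}\in\mathcal{\sunderline{A}}'$, the bracketed expression is \emph{verbatim} the objective appearing right after step (b) in the proof of the degraded-WTC theorem of Section~\ref{sec3}, with $\sunderline{\Delta}$ there replaced by the residual of \eqref{eq:d}; since that proof uses only that $\sunderline{\Delta}=\begin{bmatrix}\Delta&0\\0&\Delta^*\end{bmatrix}$ with $\Delta\succ0$ and that $\sunderline{H}_e=\begin{bmatrix}H_e&0\\0&H_e^*\end{bmatrix}$, it goes through unchanged, so the inner maximum over $\mathcal{\sunderline{K}}$ is unaffected by restricting $\sunderline{K}_{\mathbf{X}}$ to the proper set $\mathcal{\sunderline{K}}_p$ (this is the content of the Remark around \eqref{eq:22} applied to the present $\sunderline{\Delta}$). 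Next I would collapse the restricted objective: for $\sunderline{A}=\begin{bmatrix}A&0\\0&A^*\end{bmatrix}$ and $\sunderline{K}_{\mathbf{X}}=\begin{bmatrix}K_{\mathbf{X}}&0\\0&K_{\mathbf{X}}^*\end{bmatrix}$, both $(\sunderline{\Delta}+\sunderline{H}_e^H\sunderline{H}_e)\sunderline{K}_{\mathbf{X}}$ and $\sunderline{H}_e\sunderline{K}_{\mathbf{X}}\sunderline{H}_e^H$ are block diagonal with lower block the complex conjugate of the upper one, and the determinants $\det(I_{n_t}+(\Delta+H_e^HH_e)K_{\mathbf{X}})$ and $\det(I_{n_e}+H_eK_{\mathbf{X}}H_e^H)$ are real and positive (both matrices added to the identity have real nonnegative spectra, being products of Hermitian positive-semidefinite matrices), so the factor $\tfrac12$ is exactly absorbed and the bracket reduces to $\log\det(I_{n_t}+(\Delta+H_e^HH_e)K_{\mathbf{X}})-\log\det(I_{n_e}+H_eK_{\mathbf{X}}H_e^H)$, which by \eqref{eq:m1} and the identification of $\Delta$ with the real UDL residual of $A$ is exactly the real-equivalent objective $C_p(A,K_{\mathbf{X}})$ underlying Oggier et al.'s reformulation \eqref{eq:27}. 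Since the maps $\begin{bmatrix}A&0\\0&A^*\end{bmatrix}\!\leftrightarrow\!A$ and $\begin{bmatrix}K_{\mathbf{X}}&0\\0&K_{\mathbf{X}}^*\end{bmatrix}\!\leftrightarrow\!K_{\mathbf{X}}$ are bijections between $\mathcal{\sunderline{A}}'$ and $\{A:I_{n_r}-AA^H\succ0\}$, and between $\mathcal{\sunderline{K}}_p$ and $\mathcal{K}$ (note $\tr(\sunderline{K}_{\mathbf{X}})=2\tr(K_{\mathbf{X}})$ on such arguments), the right-hand side above equals $\mathop{\min}_{A}\mathop{\max}_{K_{\mathbf{X}}\in\mathcal{K}}C_p(A,K_{\mathbf{X}})$, which is $C_p$ by \eqref{eq:27}. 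Hence $C_g\le C_p$, and together with \eqref{eq:9} this yields $C_g=C_p$.

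The step I expect to require the most care is the first one: one must check that for \emph{every} $\sunderline{A}\in\mathcal{\sunderline{A}}'$ the residual of \eqref{eq:d} really has the form $\begin{bmatrix}\Delta&0\\0&\Delta^*\end{bmatrix}$ with $\Delta\succ0$ for which the Section~\ref{sec3} argument was written — positive-definiteness is already granted below \eqref{eq:d}, and the block-diagonal, conjugate-paired structure is forced because $\sunderline{H}_r$, $\sunderline{H}_e$ and $\sunderline{A}$ are all matrices of that kind — and then that the determinant factorisation in the second step is exact, with no spurious factor of $2$ or stray imaginary part. The min--max/max--min interchanges and the contraction $\mathcal{\sunderline{A}}\to\mathcal{\sunderline{A}}'$ have already been justified before \eqref{eq:mm} (the objective is convex in $\sunderline{A}$, concave in $\sunderline{K}_{\mathbf{X}}$, and $\mathcal{\sunderline{A}}'$ is convex), so the remainder is essentially chaining the three observations above.
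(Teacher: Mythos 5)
Your proposal is correct and follows essentially the same route as the paper: bound $C_g$ via \eqref{eq:mm}, identify the residual \eqref{eq:d} as a block-diagonal $\sunderline{\Delta}\succ 0$ so that Remark 1 (the degraded-WTC result) lets the inner maximization be restricted to $\mathcal{\sunderline{K}}_p$, collapse the block-diagonal objective to the complex-dimension form, and recognize the result as the min--max reformulation \eqref{eq:27} of $C_p$, giving $C_g\le C_p$ and hence $C_g=C_p$ by \eqref{eq:9}. The only difference is that you spell out more carefully the bijection between $\mathcal{\sunderline{A}}'$ and $\mathcal{A}$ and the exactness of the determinant factorisation, which the paper treats implicitly.
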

\begin{proof}
Since $I_{2n_r}-\sunderline{A}\sunderline{A}^H \succ 0$ is equivalent to $I_{n_r}-AA^H \succ 0$, we define $\mathcal{A}$ as
\begin{align}
\mathcal{A}=\{A|I_{n_r}-AA^H \succ 0\}.
\end{align}
Then we have
\begin{align}
C_g&\overset{(a)}{\leq}\mathop{\min}_{\sunderline{A} \in \mathcal{\sunderline{A}}'}\mathop{\max}_{\sunderline{K}_{\mathbf{X}}\in \mathcal{\sunderline{K}}_g} [ \frac{1}{2}\log\det[I_{2n_t}+((\sunderline{H}_r^H-\sunderline{H}_e^H \sunderline{A}^H)(I-\sunderline{A}\sunderline{A}^H)^{-1}(\sunderline{H}_r-\sunderline{A}\sunderline{H}_e)+\sunderline{H}_e^H\sunderline{H}_e)\sunderline{K}_{\mathbf{X}}]\nonumber\\
    &\quad-\frac{1}{2}\log\det(I_{2n_e}+\sunderline{H}_e\sunderline{K}_{\mathbf{X}}\sunderline{H}_e^H) ]\nonumber\\
&\overset{(b)}{=}\mathop{\min}_{\sunderline{A} \in \mathcal{\sunderline{A}}'}\mathop{\max}_{\sunderline{K}_{\mathbf{X}}\in \mathcal{\sunderline{K}}_g} \left[ \frac{1}{2}\log\det(I_{2n_t}+(\sunderline{\Delta}+\sunderline{H}_e^H\sunderline{H}_e)\sunderline{K}_{\mathbf{X}})-\frac{1}{2}\log\det(I_{2n_e}+\sunderline{H}_e\sunderline{K}_{\mathbf{X}}\sunderline{H}_e^H) \right]\nonumber\\
&\overset{(c)}{=}\mathop{\min}_{\sunderline{A} \in \mathcal{\sunderline{A}}'}\mathop{\max}_{\sunderline{K}_{\mathbf{X}}\in \mathcal{\sunderline{K}}_p} \left[ \frac{1}{2}\log\det(I_{2n_t}+(\sunderline{\Delta}+\sunderline{H}_e^H\sunderline{H}_e)\sunderline{K}_{\mathbf{X}})-\frac{1}{2}\log\det(I_{2n_e}+\sunderline{H}_e\sunderline{K}_{\mathbf{X}}\sunderline{H}_e^H) \right]\nonumber\\
&=\mathop{\min}_{A\in\mathcal{A}}\mathop{\max}_{K_{\mathbf{X}}\in \mathcal{K}} \left[\log\det(I_{n_t}+(\Delta+H_e^H H_e)K_{\mathbf{X}})-\log\det(I_{n_e}+H_e K_{\mathbf{X}}H_e^H) \right]\nonumber\\
&=\mathop{\min}_{Q \in \mathcal{Q}}\mathop{\max}_{K_\mathbf{x} \in \mathcal{K}}\left[ \log\det(I_{n_t}+H^HQ^{-1}H K_{\mathbf{X}})-\log\det(I_{n_e}+H_e K_{\mathbf{X}}H_e^H) \right]\nonumber\\
&=\mathop{\min}_{Q \in \mathcal{Q}}\mathop{\max}_{K_\mathbf{x} \in \mathcal{K}}\left[ \log\det(I_{n_r+n_e}+Q^{-1}H K_{\mathbf{X}}H^H)-\log\det(I_{n_e}+H_e K_{\mathbf{X}}H_e^H) \right]\nonumber\\
&\overset{(d)}{=} C_p,
\end{align}
where (a) follows by \eqref{eq:mm}, (b) follows by \eqref{eq:d}, (c) follows since Remark 1 is valid for any $\sunderline{A} \in \mathcal{\sunderline{A}}'$, so we have shown $C_g \leq C_p$ holds for the WTC, and (d) follows by \eqref{eq:27}. In addition, $C_p\leq C_g$ always holds as the proper signal is a special case of the general complex signal, so we have $C_g=C_p$, which finishes the proof.
\end{proof}
\section{Numerical Results} \label{sec5}
The following part of this paper moves on to compare the maximum achievable rate of the proper signal and general complex signal under the sum power constraint $P$ in the MIMO complex WTC. The channel matrices for the legitimate user and eavesdropper are
\begin{align}
H_r=\begin{bmatrix}
1.8+0.2i&0.8\\
1.5-0.4i&-0.8+1.1i
\end{bmatrix},
H_e=\begin{bmatrix}
0.9-0.7i&-1.2+1.4i\\
-0.3&-1.1-0.3i
\end{bmatrix},
\end{align}
which are generated randomly under the condition that $H_r^HH_r-H_e^HH_e$ has at least one positive eigenvalue to ensure the positive secrecy capacity. The eigenvalues are -2.6117 and 4.7017 for this case. Since the noises are assumed to be proper Gaussian with zero mean vectors and identical covariance matrices $I_2$, so the SNR is defined as $\SNR=10lg\frac{P}{2}$. In Fig.\ref{fig_6db} and Fig.\ref{fig_12db}, the convergence rate for $\SNR=6dB$ and $\SNR=12dB$ are depicted.
\begin{figure}[!t]
\centering
\includegraphics[width=0.7\textwidth]{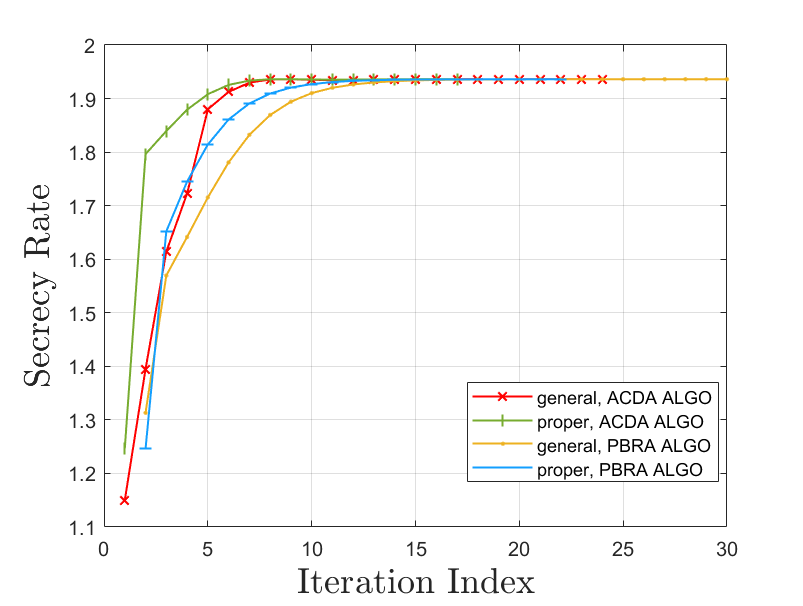}
\caption{Convergence rate comparison for the proper signal and general complex signal in the MIMO WTC, SNR=6dB.}
\label{fig_6db}
\end{figure}
\begin{figure}[!t]
\centering
\includegraphics[width=0.7\textwidth]{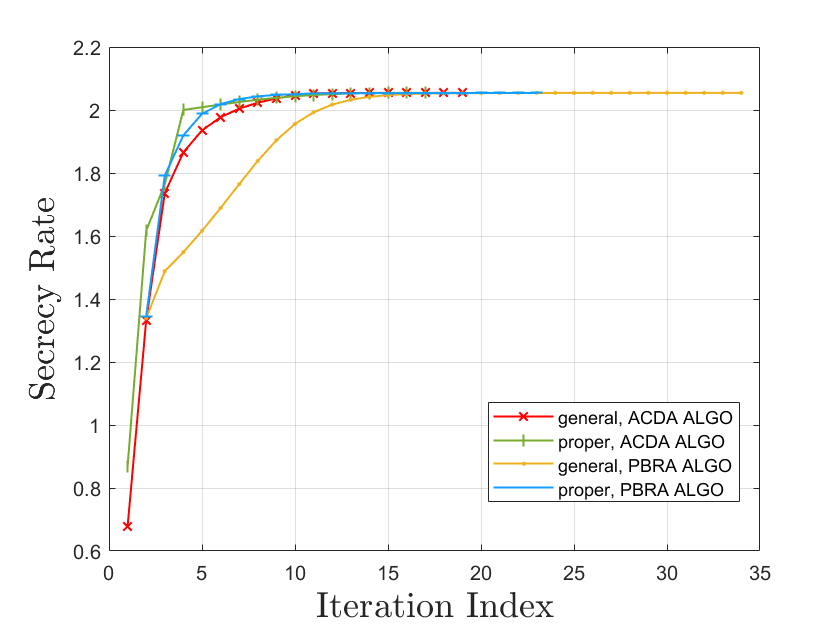}
\caption{Convergence rate comparison for the proper signal and general complex signal in the MIMO WTC, SNR=12dB.}
\label{fig_12db}
\end{figure}

The red curve and the yellow curve are the convergence rate for the general complex signal while the the green curve and the blue curve are the convergence for the proper signal. We used two different iterative algorithms to derive the secrecy rate. One is the ACDA, which is locally optimal, and the other is the PBRA, which is globally optimal \cite{mukherjeeSecrecyCapacityMIMO2021}. Both of the two algorithms are designed for the proper signal but we can modify them for the general complex signal. For each iteration curve, the initial point is chosen randomly so their initial secrecy rates are different. However, it can be observed that both proper signal and general complex signal achieve roughly the same maximum secrecy rate after several steps of iteration on both of the two figures. To show the iteration results more clearly, we report in Table \ref{tab1} the specific results of convergence, where the stopping criterion for ACDA and PBRA is when the increase is less than $10^{-5}$. The result of convergence in the table are the same with the error of less than $10^{-4}$. Therefore, the numerical results coincide with our results that the proper signal is capacity achieving in the MIMO complex WTC.
\begin{table}[t!]
\begin{center}
\begin{tabular}{ |p{4cm}||p{2.5cm}|p{2.5cm}|  }
 \hline
 \multicolumn{3}{|c|}{Maximum Achievable Rate} \\
 \hline
 signal and algorithm& $\SNR=6dB$&$\SNR=12dB$\\
 \hline
 general, ACDA   & 1.93626    &2.05447\\
 proper, ACDA &   1.93624  & 2.05446\\
 general, PBRA &1.93613 & 2.05440\\
 proper, PBRA    &1.93606 & 2.05446\\
 \hline
\end{tabular}
\end{center}
\caption{Comparison of the Iteration Result}
\label{tab1}
\end{table}
\section{Conclusion}\label{sec6}
In conclusion, the findings of this study can be understand as a complement to the research on the secrecy capacity of complex WTCs. We have demonstrated the correctness to focus on the proper signals in the complex WTC because the secrecy capacity is achieved only if the transmitted signal is proper Gaussian. It is intuitively assumed that the transmitted signal is proper in previous works about WTC, probably because the WTC can be viewed as a point-to-point channel plus a eavesdropper and the proper signal is capacity achieving for the point-to-point channel. However, the WTC is actually a BC with a confidential message and the existence of the eavesdropper add an "interference" to the secrecy capacity of the WTC, so the WTC is different from the point-to-point channel. Thus, the technique for the point-to-point channel is no longer applicable, hence we propose a new determinant function to fulfill the proof of the optimality of the proper signal in the MIMO complex WTC.

% if have a single appendix:
%\appendix[Proof of the Zonklar Equations]
% or
%\appendix  % for no appendix heading
% do not use \section anymore after \appendix, only \section*
% is possibly needed

% use appendices with more than one appendix
% then use \section to start each appendix
% you must declare a \section before using any
% \subsection or using \label (\appendices by itself
% starts a section numbered zero.)
%

\appendices
\section{}\label{appendixa}
Oggier et al. \cite{oggierSecrecyCapacityMIMO2011} and Khisti et al.  \cite{khistiSecureTransmissionMultiple2010a} computed the secrecy capacity of the multiple antenna WTC with proper signals. Nevertheless, the proof can be transplanted to the situation with real signals without difficulty because all the related mathematical procedures in complex domain have counterparts in real domain. Therefore, the secrecy capacity of real MIMO wiretap channel is achieved when the real signal is Gaussian as
\begin{equation}
    C_s=\mathop{\max}_{K_{\mathbf{X}} \succeq 0, \tr(K_{\mathbf{X}})  \leq P} \left[ \frac{1}{2}\log\det(I_{n_r}+H_r K_{\mathbf{X}} H_r^T)-\frac{1}{2}\log\det(I_{n_e}+H_e K_{\mathbf{X}} H_e^T) \right],
\end{equation}
where $K_{\mathbf{X}}=\mathbb{E}\{\mathbf{X}\mathbf{X}^T\}$ and $H_r,H_e$ are real matrices\cite{tieliuNoteSecrecyCapacity2009}.

As for general complex signals, a complex signal $\mathbf{X}=\mathfrak{R}\{\mathbf{X}\}+i\mathfrak{I}\{\mathbf{X}\}\in \mathbb{C}^{n \times 1}$ can be regarded as a joint real signal $\soverline{\mathbf{X}}=(\mathfrak{R}\{\mathbf{X}\}^T,\mathfrak{I}\{\mathbf{X}\}^T)^T \in \mathbb{R}^{2n \times 1}$ and its information measures will keep the same because information measures such as the entropy are decided by the probability distribution of the random vector rather than the specific form of the random vector. Accordingly, \eqref{eq:1} can be written as
\begin{subequations}
\begin{align}
\underbrace{\begin{bmatrix}
    \sqrt{2}\mathfrak{R}\{\mathbf{Y}_r\}\\
    \sqrt{2}\mathfrak{I}\{\mathbf{Y}_r\}
\end{bmatrix}}_{\soverline{\mathbf{Y}}_r} = 
\underbrace{\begin{bmatrix}
    \sqrt{2}\mathfrak{R}\{H_r\}&-\sqrt{2}\mathfrak{I}\{H_r\}\\
    \sqrt{2}\mathfrak{I}\{H_r\}&\sqrt{2}\mathfrak{R}\{H_r\}
\end{bmatrix}}_{\soverline{H}_r}
\underbrace{\begin{bmatrix}
    \mathfrak{R}\{\mathbf{X}\}\\
    \mathfrak{I}\{\mathbf{X}\}
\end{bmatrix}}_{\soverline{\mathbf{X}}}   +
\underbrace{\begin{bmatrix}
\sqrt{2}\mathfrak{R}\{\mathbf{N}_r\}\\
\sqrt{2}\mathfrak{I}\{\mathbf{N}_r\}
\end{bmatrix}}_{\soverline{\mathbf{N}}_r}, \\\nonumber \\
\underbrace{\begin{bmatrix}
    \sqrt{2}\mathfrak{R}\{\mathbf{Y}_e\}\\
    \sqrt{2}\mathfrak{I}\{\mathbf{Y}_e\}
\end{bmatrix}}_{\soverline{\mathbf{Y}}_e} = 
\underbrace{\begin{bmatrix}
    \sqrt{2}\mathfrak{R}\{H_e\}&-\sqrt{2}\mathfrak{I}\{H_e\}\\
    \sqrt{2}\mathfrak{I}\{H_e\}&\sqrt{2}\mathfrak{R}\{H_e\}
\end{bmatrix}}_{\soverline{H_e}} 
\underbrace{\begin{bmatrix}
    \mathfrak{R}\{\mathbf{X}\}\\
    \mathfrak{I}\{\mathbf{X}\}
\end{bmatrix}}_{\soverline{\mathbf{X}}}   +
\underbrace{\begin{bmatrix}
\sqrt{2}\mathfrak{R}\{\mathbf{N}_e\}\\
\sqrt{2}\mathfrak{I}\{\mathbf{N}_e\}
\end{bmatrix}}_{\soverline{\mathbf{N}}_e},
\end{align}
\end{subequations}
where the constant $\sqrt{2}$ is multiplied to make noises $\soverline{\mathbf{N}}_r$ and $\soverline{\mathbf{N}}_e$ with identity covariance matrix.

We denote the covariance of $\soverline{X}$ by $\soverline{K}$, then we have
\begin{align}
\soverline{K}_\mathbf{X}=\mathbb{E}\left\{\begin{bmatrix}    \mathfrak{R}\{\mathbf{X}\}\\
    \mathfrak{I}\{\mathbf{X}\}\end{bmatrix}\begin{bmatrix}    \mathfrak{R}\{\mathbf{X}\}\\
    \mathfrak{I}\{\mathbf{X}\}\end{bmatrix}^T\right\}.
\end{align}

Furthermore, we can characterize the capacity with general complex signals as
\begin{align}\label{eq:a4}
    C_g=C_s=\mathop{\max}_{\soverline{K}_{\mathbf{X}} \succeq 0, \tr(\soverline{K}_{\mathbf{X}})  \leq 2P} \left[ \frac{1}{2}\log\det(I_{2n_r}+\soverline{H}_r \soverline{K}_{\mathbf{X}} \soverline{H}_r^T)-\log\det(I_{2n_e}+\soverline{H}_e \soverline{K}_{\mathbf{X}} \soverline{H}_e^T) \right].
\end{align}

It was shown in \cite{schreierStatisticalSignalProcessing2010} that
\begin{align}\label{eq:a5}
\soverline{K}_\mathbf{X}=\frac{1}{4}M_{n_t}^H\sunderline{K}_\mathbf{X}M_{n_t},
\soverline{H}_r=\frac{\sqrt{2}}{2}M_{n_t}^H\sunderline{H}_r M_{n_t}
\end{align}
where
\begin{align}
M_{n_t}=\begin{bmatrix}
I_{n_t} & i I_n\\I_{n_t} & -i I_{n_t}
\end{bmatrix}
\end{align}
is unitary matrix of factor 2, i.e.
\begin{align} \label{eq:a6}
M_{n_t} M_{n_t}^H=M_{n_t}^H M_{n_t}=2I_{2n_t}.
\end{align}
Substituting from \eqref{eq:a5} and \eqref{eq:a6} in \eqref{eq:a4}, we can get 
\begin{align}
    C_g=\mathop{\max}_{\sunderline{K}_{\mathbf{X}}\in \mathcal{\sunderline{K}}} \left[ \frac{1}{2}\log\det(I_{2n_r}+\sunderline{H}_r\sunderline{K}_{\mathbf{X}}\sunderline{H}_r^H)-\frac{1}{2}\log\det(I_{2n_e}+\sunderline{H}_e\sunderline{K}_{\mathbf{X}}\sunderline{H}_e^H) \right],
\end{align}
which is the secrecy capacity of general complex wiretap channels and the optimal signal is complex Gaussian. 
% you can choose not to have a title for an appendix
% if you want by leaving the argument blank
\section{}\label{appendixb}
For any positive semidefinite matrix $K$, we can find a proper random vector $\mathbf{X}$ so that $K=\mathbb{E}\{\mathbf{X} \cdot \mathbf{X}^H\}$. We can partition $\mathbf{X}$ to four sub random vectors $\mathbf{X}_i: i=1,2,3,4$ as
\begin{align}
\mathbf{X}^H=\begin{bmatrix}\mathbf{X}_1^H & \mathbf{X}_2^H & \mathbf{X}_3^H & \mathbf{X}_4^H\end{bmatrix},
\end{align}
where the dimension of $\mathbf{X}_i$ coincides with the dimension of $\mathcal{S}_i$, i.e. $\dim(\mathbf{X}_i)=\dim(\mathcal{S}_i)$, then we have
\begin{subequations}\label{eq:appendixb1}
\begin{align}
    h(\mathbf{X}_1,\mathbf{X}_2,\mathbf{X}_3,\mathbf{X}_4)&=\log((\pi e)^k|K|),\\
    h(\mathbf{X}_2,\mathbf{X}_4)&=\log((\pi e)^{k_2+k_4}|K(\mathcal{S}_2 \cup \mathcal{S}_4))|),\\
    h(\mathbf{X}_1,\mathbf{X}_2)&=\log((\pi e)^{k_1+k_2}|K(\mathcal{S}_1 \cup \mathcal{S}_2))|),\\
    h(\mathbf{X}_3,\mathbf{X}_4)&=\log((\pi e)^{k_3+k_4}|K(\mathcal{S}_3 \cup \mathcal{S}_4))|),\\
    h(\mathbf{X}_2)&=\log((\pi e)^{k_2}|K(\mathcal{S}_2)|),\\
    h(\mathbf{X}_4)&=\log((\pi e)^{k_4}|K(\mathcal{S}_4)|).    
\end{align}
\end{subequations}
Now consider the conditional entropy inequality
\begin{align}\label{eq:appendixb2}
    h(\mathbf{X}_1,\mathbf{X}_3|\mathbf{X}_2,\mathbf{X}_4)&=h(\mathbf{X}_1|\mathbf{X}_2,\mathbf{X}_4)+h(\mathbf{X}_3|\mathbf{X}_1,\mathbf{X}_2,\mathbf{X}_4) \nonumber\\
    &\leq h(\mathbf{X}_1|\mathbf{X}_2)+h(\mathbf{X}_3|\mathbf{X}_4).
\end{align}
Substituting \eqref{eq:appendixb1} into \eqref{eq:appendixb2}, we obtain the desired result.

% can use a bibliography generated by BibTeX as a .bbl file
% BibTeX documentation can be easily obtained at:
% http://mirror.ctan.org/biblio/bibtex/contrib/doc/
% The IEEEtran BibTeX style support page is at:
% http://www.michaelshell.org/tex/ieeetran/bibtex/
%\bibliographystyle{IEEEtran}
% argument is your BibTeX string definitions and bibliography database(s)
%\bibliography{IEEEabrv,../bib/paper}
%
% <OR> manually copy in the resultant .bbl file
% set second argument of \begin to the number of references
% (used to reserve space for the reference number labels box)
\bibliographystyle{IEEEtran}
\bibliography{ref}

% biography section
% 
% If you have an EPS/PDF photo (graphicx package needed) extra braces are
% needed around the contents of the optional argument to biography to prevent
% the LaTeX parser from getting confused when it sees the complicated
% \includegraphics command within an optional argument. (You could create
% your own custom macro containing the \includegraphics command to make things
% simpler here.)
%\begin{IEEEbiography}[{\includegraphics[width=1in,height=1.25in,clip,keepaspectratio]{mshell}}]{Michael Shell}
% or if you just want to reserve a space for a photo:

% You can push biographies down or up by placing
% a \vfill before or after them. The appropriate
% use of \vfill depends on what kind of text is
% on the last page and whether or not the columns
% are being equalized.

%\vfill

% Can be used to pull up biographies so that the bottom of the last one
% is flush with the other column.
%\enlargethispage{-5in}

% that's all folks
\end{document}